\newtheorem{theorem}{Theorem}
\newtheorem{lemma}{Lemma}
\newtheorem{definition}{Definition}
\newtheorem{assumption}{Assumption}
\newtheorem{remark}{Remark}
\algrenewcommand\algorithmicrequire{\textbf{Input:}}
\algrenewcommand\algorithmicensure{\textbf{Output:}}
\newcommand{\cmark}{\ding{51}}
\newcommand{\xmark}{\ding{55}}
\colorlet{goodbg}{green!18}
\colorlet{okbg}{red!10}
\newcommand{\pref}[1]{\textbf{#1}~\cmark}
\newcommand{\notpref}[1]{#1~\xmark}
\definecolor{lime}{HTML}{A6CE39}
\DeclareRobustCommand{\orcidicon}{%
    \begin{tikzpicture}
    \draw[lime, fill=lime] (0,0) 
    circle [radius=0.16] 
    node[white] {{\fontfamily{qag}\selectfont \tiny ID}};    \draw[white, fill=white] (-0.0625,0.095) 
    circle [radius=0.007];    \end{tikzpicture}
    \hspace{-2mm}}
\xdef\csname orcid\x\endcsname{\noexpand\href{https://orcid.org/\csname orcidauthor\x\endcsname}{\noexpand\orcidicon}}
\begin{document}
%
\title{Decentralized GNSS at Global Scale via Graph-Aware Diffusion Adaptation}
%
%
%
%

\author{Xue Xian Zheng$^{1}$, Xing Liu$^{2,3,\dagger}$, and Tareq Y. Al-Naffouri$^{1}$ \\
$^{1}$ King Abdullah University of Science and Technology, Saudi Arabia\\
$^{2}$ Institute for Space Studies of Catalonia, Spain \\
$^{3}$ Universitat Autònoma de Barcelona, Spain
\thanks{
$^\dagger$Corresponding author. e-mail: xing@ieec.cat
}
}
%
%

\markboth{}%
%
\\
\IEEEtitleabstractindextext{%
\begin{abstract}\justifying
Network-based Global Navigation Satellite Systems (GNSS) support critical infrastructure and autonomous systems, but centralized processing hubs limit scalability, resilience, and latency. We present a global-scale decentralized GNSS architecture that jointly estimates receiver states and produces network-wide satellite-correction products. Modeling the receiver network as a time-varying graph, we formulate information-mixing schedule design as training a deep linear neural network whose layers are doubly stochastic matrices constrained by graph topology. Masked Sinkhorn–Knopp projections are integrated with backpropagation to preserve topology-awareness and feasibility during training. The learned schedule supports an online gradient-tracking diffusion strategy, allowing receivers to perform local inference from their own observations while exchanging compact messages to reach consensus on satellite corrections and self-localization. When receivers are reference stations, the resulting consensus products can be broadcast for precise point positioning (PPP) and precise point positioning--real-time kinematic (PPP--RTK) services. Experiments using hundreds of globally distributed IGS stations show that the proposed method matches centralized baselines while converging faster and reducing communication overhead relative to existing decentralized approaches. Overall, by reframing decentralized GNSS as a networked signal processing problem, this work highlights the potential of decentralized optimization, consensus-based inference, and graph-aware learning as effective tools for operational satellite navigation.
\end{abstract}

\begin{IEEEkeywords}
GNSS, decentralized optimization, time-varying graphs, Sinkhorn–Knopp projections, gradient tracking, diffusion strategies, PPP/PPP-RTK. 
\end{IEEEkeywords}}

\maketitle

\IEEEdisplaynontitleabstractindextext

%
\IEEEpeerreviewmaketitle

\section*{Introduction}
High-precision Global Navigation Satellite Systems (GNSS) provide the foundational layer for applications ranging from autonomous transport to crustal deformation monitoring. To achieve centimeter-level precision—such as in precise point positioning (PPP) ~\cite{heroux1995gps,zumberge1997precise} and precise point positioning–real-time kinematic (PPP-RTK)~\cite{wabbena2005ppp} modes—user receivers rely on network-derived products, including precise satellite clocks, orbits, and atmospheric corrections.  Historically, these products are estimated in a centralized manner: observations from global networks of continuously operating reference stations (CORS) are streamed to a single analysis center for joint processing \cite{teunissen2010ppp,laurichesse2009integer,collins2010undifferenced}. While effective at current scales, this centralized paradigm is approaching a critical threshold, strained by an exponential increase in data volume and network density.

The rapid expansion of multi-constellation tracking and the imminent integration of Low Earth Orbit (LEO) augmented GNSS are creating a big-data regime characterized by high observation densities and short update intervals~\cite{10139999,10542356}. Proposed LEO constellations alone are projected to comprise over $60$k satellites over the coming years, according to a U.S. Government Accountability Office technology report~\cite{gao2022largeconstellations}. Although only a subset of these LEO constellations is explicitly designed for navigation, an increasing number of non-dedicated systems are being actively explored by the GNSS community as signals of opportunity for positioning, thereby potentially enabling an unprecedented number of spaceborne transmitters to be leveraged for navigation~\cite{10140066}. In this context, centralized processing creates severe vulnerabilities, including prohibitive computational complexity, bandwidth saturation from raw data backhaul, and latency in product dissemination. Moreover, reliance on a monolithic analysis hub introduces a resilience bottleneck, where node failures or link congestion can compromise the entire service. To support the next generation of instantaneous, ubiquitous positioning, the computational burden must shift from the center to the edge.

Decentralized architectures~\cite{sayed2014diffusion,4118472} offer a compelling response to these systemic vulnerabilities, as evidenced by the recent success of distributed systems in blockchain networks~\cite{wood2014ethereum}, large-scale training on neural models~\cite{mcmahan2017communication}, and robotic swarm intelligence~\cite{rubenstein2014programmable}. In these paradigms, computation and storage are moved away from a single monolithic server and distributed across many loosely coupled nodes, which exchange only compact, high-level information (e.g., gradients, model updates, or consensus variables) rather than raw data streams~\cite{4118472}. This shift enables systems that scale with the number of participants, improve robustness through redundancy, and adapt locally to heterogeneous conditions, while maintaining low end-to-end latency~\cite{mcmahan2017communication}.

Translating these ideas to GNSS networks is particularly promising. Decentralized processing can alleviate central bottlenecks and enable new classes of real-time services. Edge nodes—such as reference stations or high-end user receivers—can perform local preprocessing, quality control, and partial estimation of satellite and atmospheric parameters, while exchanging only compact correction updates or state summaries with neighboring nodes and a lightweight coordination layer. However, the realization
of truly decentralized high-precision GNSS is still at an
early stage. Yet the idea of distributing the processing
workload across the network traces back more than a
decade. Boomkamp's DANCER system was an early attempt to decentralize large-scale geodetic GNSS batch processing by eliminating receiver-specific unknowns locally and transmitting only reduced information on shared global parameters, thereby avoiding the transfer of full normal-equation systems~\cite{boomkamp2012distributed}. However, its accuracy depends on the statistical assumption that certain local cross-covariances are negligible, which is mainly justified for sufficiently large and diverse datasets. Khodabandeh, Teunissen, and Zaminpardaz later formulated GNSS processing as a decentralized recursive filtering problem rather than a batch normal-equation adjustment~\cite{khodabandeh2017consensus}. Their consensus-based Kalman filter estimates a common time-varying state observed by multiple receivers, with the state model describing shared GNSS parameters such as ionospheric quantities and instrumental biases, rather than receiver-specific motion. Decentralization is enabled by the additive information-form update, where local information matrices and vectors are combined through neighbor-based average consensus to approximate the centralized global update. Although the framework is general, its GNSS demonstration is limited to a small regional real-time filtering example and does not establish scalability to global, multi-parameter GNSS service generation. A closer methodological predecessor is the decentralized alternating direction method of multipliers (ADMM) network of Khodabandeh and Teunissen, which estimates satellite parameters over a global receiver network without relying on the statistical approximation used in DANCER~\cite{khodabandeh2019distributed}. By exchanging primal and dual variables among neighboring nodes, their method achieves convergence to centralized satellite code-bias solutions, but it is mainly shown for static and well-connected communication graphs. More recently, Hou and Zhang proposed a semi-decentralized PPP--RTK product-generation strategy based on subnetwork processing, datum alignment, and atmospheric refinement, achieving centralized-level accuracy with substantially reduced computational cost~\cite{hou2023decentralized}. Overall, existing studies mark important steps toward distributed GNSS architectures, but they remain limited by statistical assumptions, communication and convergence requirements, partial decentralization, or insufficient exploitation of graph structure.

In this paper, we reformulate decentralized GNSS processing as a problem of networked signal and information processing (SIP)~\cite{Multitaskg,vlaski2023networked,zheng2025error,zheng2026quantitative}, enabling joint receiver-state estimation and network-wide satellite-correction generation directly at the edge. Unlike conventional schemes that rely on fixed or limited communication patterns, the proposed framework explicitly supports time-varying receiver networks and global-scale deployment. By exploiting the underlying graph structure, the framework provides a principled approach to improving scalability, communication efficiency, and latency, thereby enabling scalable decentralized GNSS estimation and correction architectures. 

Our main contributions are as follows: 
 \begin{itemize}
    \item  We formalize the global GNSS receiver network as a time-varying graph, where nodes represent receivers and edges capture time-varying communication links (\textbf{Figure}~\ref{fig:igs-graph}). We then formulate the design of inter-receiver information mixing schedules as the offline training of a deep linear neural network. Each layer is constrained to be doubly stochastic with a sparsity pattern consistent with the underlying graph topology, and the finite product of these layers is optimized to approximate the consensus operator. Feasibility is ensured during training via masked Sinkhorn--Knopp projections, which enforce both topology awareness and doubly stochasticity (\textbf{Figure}~\ref{fig_1}, \textbf{Left Panel}; \textbf{Figure}~\ref{fig:training}). 
    \item We develop a fully decentralized online diffusion strategy that applies the learned mixing schedules to process GNSS observations. By representing local receiver states and global satellite parameters as latent graph signals, each receiver iteratively approaches the centralized solution using only local observations and compact message exchanges with neighboring nodes (\textbf{Figure}~\ref{fig_1}, \textbf{Right Panel}). As a complementary theoretical result, we also derive convergence guarantees and establish linear convergence rates under the stated assumptions. 
    \item We evaluate the proposed architecture on a global GNSS network constructed from the geodetic coordinates of hundreds of IGS stations and corresponding RINEX navigation data. Numerical results show that the decentralized estimates match the centralized baseline, while converging faster than existing decentralized methods and reducing communication overhead (\textbf{Figure}~\ref{fig:gt-vs-centralized}; \textbf{Figure}~\ref{fig:consensus-metropolis}). These findings suggest a viable way to address the big-data challenges of large-scale satellite navigation. 
\end{itemize}

\begin{figure*}[!t]
\centering
\includegraphics[width=0.85\linewidth,height=8.5cm]{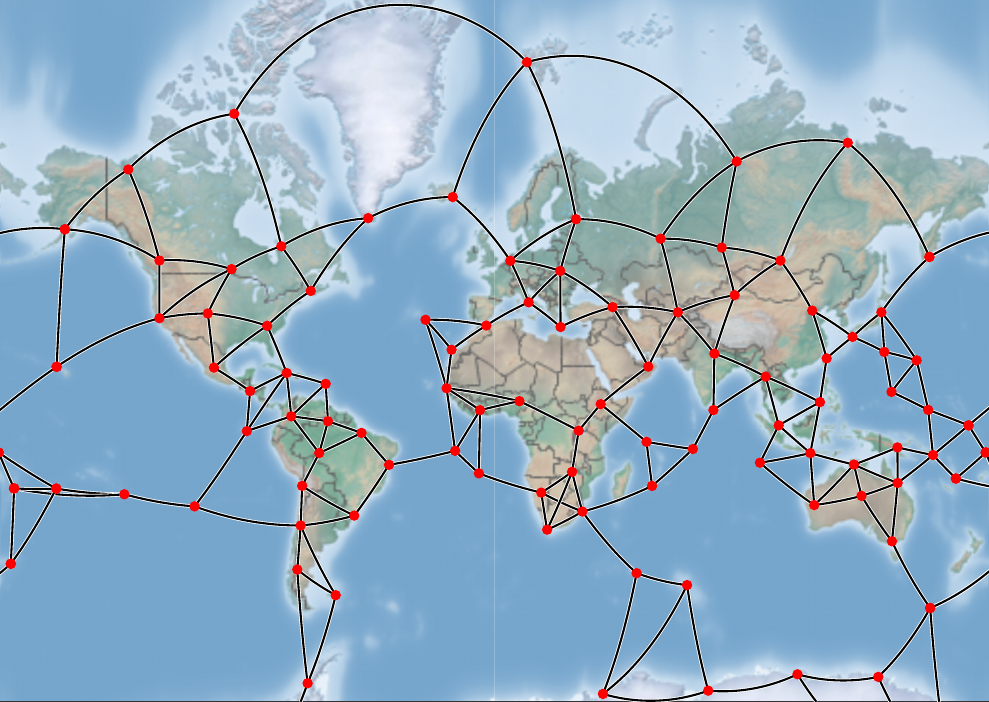}
\caption{Decentralized GNSS receiver network used in this work: a sparse, global graph with $100$ nodes placed at selected IGS station locations. Red points denote nodes; black lines denote communication links. The depicted communication pattern yields a graph characterized by $182$ edges, a diameter of $12$, and a radius of $10$. Note that this represents a single instance; patterns can change under different protocols, and the associated mixing weights that govern information exchange between nodes also vary; together, these constitute a time-varying communication graph.
}
\label{fig:igs-graph}
\end{figure*}
The remainder of this paper is organized as follows. The \textbf{Results} section presents the overall design of the proposed architecture and evaluates its performance across a range of metrics. The \textbf{Discussion} section focuses on the limitations of the current architecture and possible directions for future research. The \textbf{Methods} section describes the detailed GNSS observation model, its transformation into a decentralized formulation, the offline learning of mixing schedules, and the online decentralized diffusion strategy. The theoretical convergence analysis is provided in the \textbf{Supplementary Information}.

\begin{figure*}[!t]
\centering
\includegraphics[width=1\textwidth]{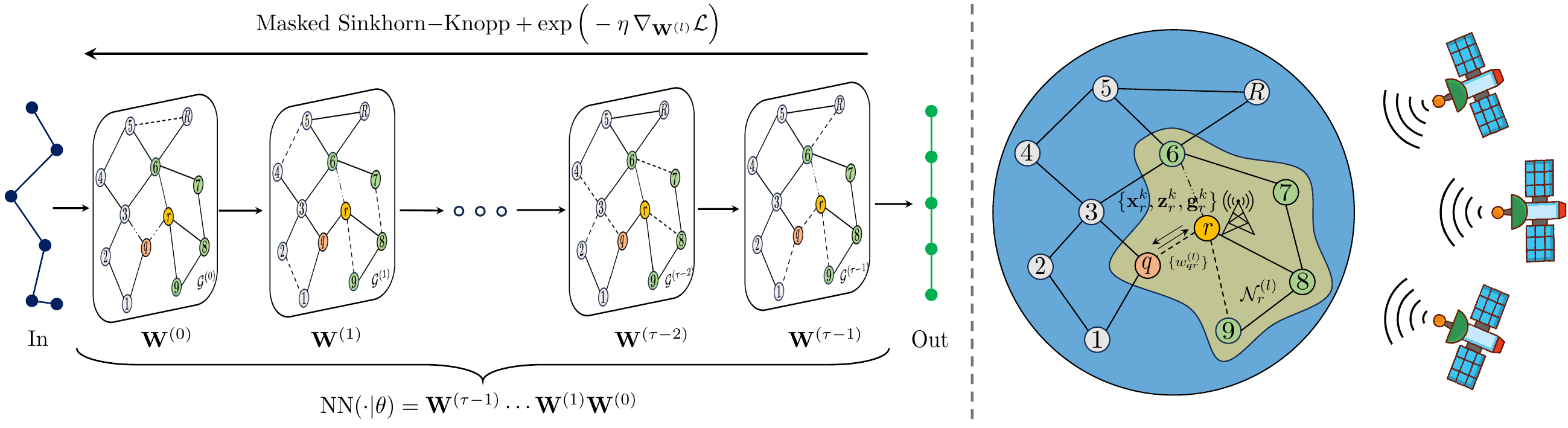}
\caption{{\bf{(Left.)}} Learning graph-aware mixing sequence $\theta=\{\mathbf{W}^{(l)}\}_{l=0}^{\tau-1}$ on time-varying graph $\mathcal{G}^{(l)}=(\mathcal{V},\mathcal{E}^{(l)},\mathbf{W}^{(l)})$. We regard the product \(\mathrm{NN}(\cdot|\theta)= \mathbf{W}^{(\tau-1)}\cdots \mathbf{W}^{(0)}\) as a deep linear neural network and optimize \(\theta\) via exponentiated-gradient updates, masking, and Sinkhorn Knopp-projection. This enforces each \(\mathbf{W}^{(l)}\) to be doubly stochastic while driving the product toward the consensus operator \(\tfrac{1}{R}\mathbf{1}\mathbf{1}^\top\) with small error \(\epsilon_\tau\). {\bf{(Right.)}} ATC GT-Diffusion scheme with $\theta$ for GNSS network. Each receiver $r\in\mathcal{V}$ updates its state $(\mathbf{x}_r^{k}, \mathbf{z}_r^{k})$ and GT variable $\mathbf{g}_r^{k}$ at iteration $k$.
Updates use (i) local observations from the visible satellites and (ii) messages from neighbors $\mathcal{N}^{(l)}_{r}$, combined via mixing weight $w^{(l)}_{qr}$ for $q\in\mathcal{N}^{(l)}_{r}$ over a time-varying graph
$\mathcal{G}^{(l)}$. Here $l=k \bmod \tau$, i.e., $\mathbf{W}^{(l)}$ is sampled $\tau$-cyclically from $\theta$ across $k$, which promotes convergence.}
\label{fig_1}
\end{figure*}

\section*{Results}\label{sec:formulation}
\subsection*{The Overall Design}
The proposed global-scale, decentralized GNSS architecture operates as a two-stage system, as illustrated in Figure~\ref{fig_1}. The first stage is an offline graph-aware learning phase that derives a sequence of doubly stochastic mixing matrices tailored to a set of communication patterns. The second stage is an online optimization phase that executes an adapt-then-combine (ATC) diffusion strategy augmented with gradient tracking (GT). In this phase, nodes exchange global satellite parameters and gradient trackers to reach consensus on global states while simultaneously optimizing local parameters. These two stages are tightly coupled in a unified framework termed \textbf{\texttt{GraT-Diff}} (pronounced ``Great-Diff''), short for 
\textbf{Gra}ph-aware gradient-\textbf{T}racking \textbf{Diff}usion with an ATC protocol. The offline phase overcomes the limitations of single-snapshot topology optimization by maximizing information mixing over time, thereby significantly accelerating the online phase's convergence to the centralized baseline.

\subsection*{Agreement on Satellite Parameters}
We partition the GNSS parameter space at receiver $r$ into two distinct components: a local state $\mathbf{x}_r$ (capturing unique parameters such as incremental position, receiver clock offsets, atmospheric delays, and carrier-phase ambiguities) and a shared global state $\mathbf{z}$ (encompassing satellite clocks and hardware biases). The incremental position states follow the standard GNSS observed-minus-computed formulation, where the observation equations are linearized around a coarse receiver-state estimate. The resulting linearized GNSS observation model can be written as:
\begin{equation}
\mathbf{y}_r = \mathbf{A}_r \mathbf{x}_r + \mathbf{B}_r \mathbf{z} + \mathbf{n}_r,
\label{eq:obs_r}
\end{equation}
where $\mathbf{y}_r$ denotes the observed-minus-computed observation vector, $\mathbf{A}_r$ and $\mathbf{B}_r$ are the design matrices, and $\mathbf{n}_r$ represents measurement noise.

Because $\mathbf{z}$ represents the physical properties of the satellite common to all observers, it mathematically couples the estimation tasks throughout the network. Standard centralized approaches resolve this coupling by aggregating high-dimensional raw data at a single fusion center—a process that scales poorly. By contrast, we demonstrate that our framework enables nodes to reach strict consensus on the global variable $\mathbf{z}$ solely through compact peer-to-peer exchanges. This allows the network to recover the precise centralized solution while minimizing local objectives and maintaining communication efficiency (see \textbf{Methods} for full formulation details).

\subsection*{Communication graphs, Patterns and Graph-Aware Fast Mixing Weights}
As seen in Figure~\ref{fig:igs-graph}, we model the GNSS infrastructure not as a static hierarchy, but as a time-varying communication graph $\mathcal{G}^{(l)}=(\mathcal{V},\mathcal{E}^{(l)},\mathbf{W}^{(l)})$. Here, the $R$ receivers form the node set $\mathcal{V}=\{1,\ldots,R\}$, while the edge set $\mathcal{E}^{(l)} \subseteq \mathcal{V} \times \mathcal{V}$ represents time-varying communication links governed by bandwidth constraints and exchange protocols. To orchestrate information flow, each snapshot is assigned a mixing matrix $\mathbf{W}^{(l)}=[w^{(l)}_{rq}]\in\mathbb{R}^{R\times R} $, compatible with the topology,
where $w^{(l)}_{rq}>0$ only if receivers $r$ and $q$ can communicate. Furthermore, $\mathbf{W}^{(l)}$ is constrained to be doubly stochastic, satisfying $\mathbf{W}^{(l)}\mathbf{1}=\mathbf{1},\mathbf{1}^{\top}\mathbf{W}^{(l)}=\mathbf{1}^{\top}$, which ensures balanced information aggregation.

A fundamental challenge in such global-scale networks is that the graph may be sparsely connected, making the propagation of dispersed information across the network costly. To overcome this, our framework leverages the \textbf{block contraction property} that optimizes $\theta=\{\mathbf{W}^{(l)}\}_{l=0}^{\tau-1}$ over a time window $\tau\in\mathbb{N}_{\ge1}$. This relaxes the requirement of instantaneous connectivity and instead promotes effective information mixing over multiple graph snapshots, thus mitigating the communication burden across the network. Specifically, we recast the sequence of communication graphs as layers in a deep linear neural network $\mathrm{NN}(\cdot|\theta)$ (see Figure~\ref{fig_1}) and optimize the mixing weights $\theta$ such that their cumulative product approximates the ideal global averaging operator:
\begin{equation}
  \big\|\,\mathbf{W}^{(\tau-1)}\cdots \mathbf{W}^{(1)}\mathbf{W}^{(0)}
  - \tfrac{1}{R}\,\mathbf{1}\mathbf{1}^{\!\top}\big\| \;=\; \epsilon_\tau.
 \label{eq:prod}
\end{equation}
Here $\epsilon_\tau < 1$ is the spectral contraction factor over the window, with smaller values implying stronger contraction and faster consensus for fixed $\tau$. Specifically, when this block contraction is combined with the observation of homogeneous data across nodes, $\epsilon_\tau$ can be driven arbitrarily close to zero, provided $\tau$ is chosen between the graph diameter and roughly twice the graph radius, as suggested by previous work~\cite{hendrickx2014graph,fainman2024learned}. Several graph families even admit exact $\tau$-step averaging with $\epsilon_\tau = 0$, yielding \emph{finite-time consensus} with closed-form communication patterns and weights~\cite{ying2021exponential,nguyen2025graphs,liang2025understanding}, which significantly reduces communication overhead while accelerating convergence for decentralized methods. While these successes highlight a compelling opportunity for the co-optimization of physical GNSS receiver placement and communication patterns, existing infrastructure is calling for solutions that remain effective under prescribed topologies.
\begin{figure}[t]
  \centering
  \includegraphics[width=0.8\linewidth]{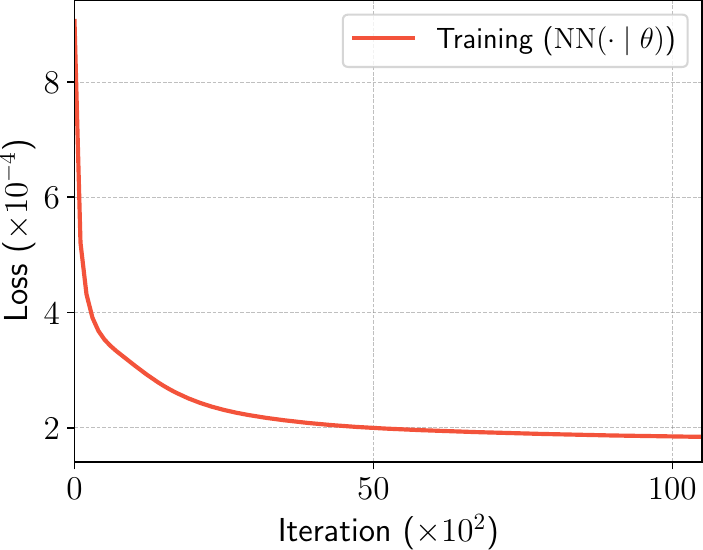}
    \caption{Training \(\mathrm{NN}(\cdot|\theta)\) via Algorithm~1 for the first $10^4$ iterations. We employ the ``Lazy'' parameterization $\frac{\mathbf{W}^{(l)}+\mathbf{I}}{2}$ at each layer to stabilize the subsequent online diffusion stage; this counteracts the effect of substantial data heterogeneity arising from potentially distinct satellite subsets observed by each receiver.}
  \label{fig:training}
\end{figure}

For the global IGS network, following the sparse communication pattern illustrated in Figure~\ref{fig:igs-graph}, the graph diameter is 12 and the radius is 10. Choosing $\tau$ below the diameter generally prevents information from propagating across the entire graph within a single communication block, whereas choosing $\tau$ significantly larger than twice the radius increases the number of trainable matrices and the memory required to store the schedule, with diminishing practical benefit. We therefore select $\tau=15$, which lies between the graph diameter and approximately twice the graph radius, consistent with guidance from \emph{finite-time consensus} results. In general, $\tau$ should scale with the mixing time or diameter of the effective communication graph: well-connected or expander-like graphs admit smaller $\tau$, whereas sparse, chain-like global graphs require larger windows. To additionally account for data heterogeneity arising from limited satellite visibility at each node, we reparameterize each layer in $\mathrm{NN}(\cdot|\theta)$ as $\frac{\mathbf{W}^{(l)}+\mathbf{I}}{2}$ (see \textbf{Methods} for details on this reparameterization trick). Under this configuration, training converges to a final loss of $1.77\times10^{-4}$ (Figure~\ref{fig:training}), corresponding to $\epsilon_\tau = 0.68$.

\subsection*{GraT-Diff Recovers the Centralized Solution}

We validate the proposed GNSS network using simulated measurements from $105$ satellites across the GPS, GLONASS, Galileo, and BeiDou constellations, derived from RINEX navigation files. Our method, \textbf{\texttt{GraT-Diff}}, integrates a standard online GT algorithm (see \textbf{Methods} for details on the algorithm) with learned, graph-aware mixing weights $\theta$.

We define the local state vector \(\mathbf{x}_r\) to encompass the receiver's incremental position, clock bias, and float ambiguity terms, leaving integer ambiguity resolution for a subsequent processing stage. The global consensus variable $\mathbf{z}$ captures the satellite-specific biases, with the first satellite fixed as a reference (bias set to zero) to ensure identifiability of the problem. Performance was benchmarked against the centralized Best Linear Unbiased Estimator (C-BLUE)~\cite{odijk2016estimability} utilizing two standard metrics relative to ground truth: the average relative positioning error $e_{\mathrm{pos}}$ and the mean absolute satellite bias error $e_{\mathbf{z}}$. We initialize all satellite-bias estimates, except for the reference satellite, to zero, corresponding to a stringent cold-start scenario in which the network has no prior information. In practical deployments, GNSS estimation is typically run recursively and uses the previous epoch’s solution as a prior, placing the initialization close to the optimum and leading to substantially faster convergence than in the pessimistic setting studied here.
\begin{figure}[t]
  \centering
  \includegraphics[width=0.8\linewidth]{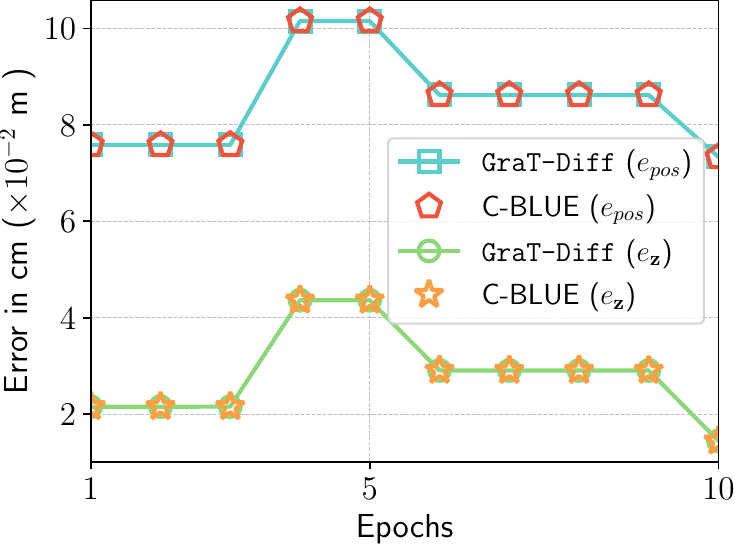}
    \caption{Accuracy evaluation between \textbf{\texttt{GraT-Diff}} and centralized solution (C-BLUE), exemplified by 10 epochs, containing local receiver's positioning error $e_{\mathrm{pos}}$ and global satellite-bias error $e_{\mathbf{z}}$. We report $e_{\mathbf{z}}$ in light-travel distance (equivalent range) units for consistency. Here, the maximum difference of $e_{\mathrm{pos}}$ and $e_{\mathbf{z}}$ between  \textbf{\texttt{GraT-Diff}} and C-BLUE are $6.88\times10^{-9}$m and $5\times10^{-9}$m, respectively, occurring at epoch 5.}
  \label{fig:gt-vs-centralized}
\end{figure}

\begin{table*}[!t]
\caption{Mahalanobis distance $d_M(\mathbf{z})$ corresponding to Fig.~\ref{fig:gt-vs-centralized}}\label{tab1}
\centering
\resizebox{\textwidth}{!}{%
\begin{tabular}{lcccccccccc}
\toprule
Epoch & 1 & 2  &  3  & 4  & 5  & 6  & 7  & 8  & 9  & 10 \\
\midrule
C-BLUE~\cite{odijk2016estimability}   & $11.1226...$  & $11.1226...$ & $11.1226...$ & $9.3236...$ & $9.3236...$ & $8.6940...$ & $8.6940...$ & $8.6940...$ & $8.6940...$ & $9.0973...$\\
Our Method & $11.1226...$  & $11.1226...$ & $11.1226...$ & $9.3236...$ & $9.3236...$ & $8.6940...$ & $8.6940...$ & $8.6940...$ & $8.6940...$ & $9.0973...$\\
Difference & $7.82\times10^{-9}$  & $-1.92\times10^{-8}$ & $1.67\times10^{-8}$ & $2.58\times10^{-8}$ & $\mathbf{3.42\times10^{-8}}$ & $-2.02\times10^{-8}$ & $2.08\times10^{-8}$ & $1.87\times10^{-8}$ & $1.12\times10^{-8}$ & $\mathbf{-4.91\times10^{-10}}$\\
\bottomrule
\end{tabular}%
}
\end{table*}

\begin{figure*}[t]
  \centering
  \includegraphics[width=1\linewidth]{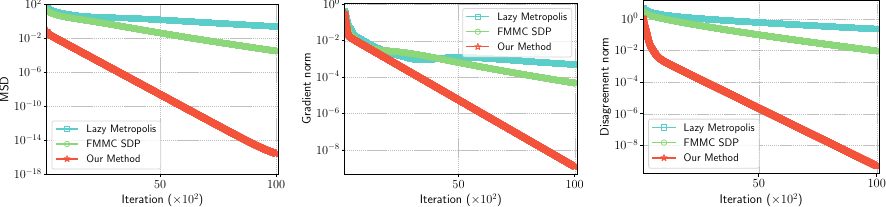}
  \caption{Time-varying graphs improve convergence without additional communication overhead. All curves are plotted starting at iteration $k=10^{2}$. For such a large and sparse graph as depicted in Fig.~\ref{fig:igs-graph}, the Lazy Metropolis and FMMC yield mixing weights with the second-largest eigenvalue $\lambda_2=0.989,0.979$, corresponding to approximate $\tau$-step contraction factors of $\epsilon_\tau \approx 0.86$ and $0.75$, respectively.}
  \label{fig:consensus-metropolis}
\end{figure*}

As shown in Figure~\ref{fig:gt-vs-centralized}, our method achieves centimeter-level parity with the centralized baseline in exemplified 10 epochs. The maximum Euclidean discrepancies are negligible: $6.88\times10^{-9}$\,m for positioning error and $5\times10^{-9}$\,m (light-travel distance) for satellite bias. This equivalence is further corroborated by the Mahalanobis distance \cite{mahalanobis2018generalized} $d_M(\mathbf{z})$ against ground truth (Table~\ref{tab1}), which utilizes the centralized precision matrix $\mathbf{\Sigma^{-1}}$ (see \textbf{Methods} on Remark \ref{def:r2} for the formal definition) to assess statistical consistency. As detailed in Table~\ref{tab1}, the divergence between the distributed and centralized solutions is restricted to the magnitude range $[4.91\times10^{-10}, 3.42\times10^{-8}]$, indicating extremely tight statistical agreement. Taken together, these results show that \textbf{\texttt{GraT-Diff}} effectively recovers the centralized solution—both in Euclidean and statistical terms—while relying solely on sparse, peer-to-peer communication.

\subsection*{GraT-Diff Accelerates Linear Convergence with Reduced Communication}
 To characterize both optimization and consensus behaviour, we monitor three per-iteration metrics: (i) the mean-square deviation (MSD) from the centralized solution, (ii) the average gradient norm, serving as a proxy for first-order optimality; and (iii) the network disagreement norm, quantifying the deviation of local nodes from the global average (see \textbf{Supplementary Information} for formal definitions). We compare \textbf{\texttt{GraT-Diff}} against two diffusion-based counterparts, both implemented within the same GT framework and using the communication pattern shown in Figure~\ref{fig:igs-graph}. The first uses heuristic Lazy Metropolis weights~\cite{nedic2017achieving}, while the second relies on weights obtained from the Fastest Mixing Markov Chain (FMMC) design~\cite{boyd2009fastest}, where a single mixing matrix is optimized via semidefinite programming (SDP) simultaneously promoting non-negative eigenvalues. The resulting fixed weight matrices have second-largest eigenvalues of $0.989$ and $0.979$, corresponding to approximate $\tau$-step contraction factors of $\epsilon_\tau \approx 0.86$ and $0.75$, respectively. The FMMC baseline can be viewed as optimizing the product in~\eqref{eq:prod} with a window length of $\tau=1$, i.e., as a single-snapshot design. Consequently, it requires every instantaneous communication graph to be connected. In contrast, the proposed approach only requires the $\tau$-block contraction ($\epsilon_\tau<1$), thereby allowing individual snapshots to be disconnected. Furthermore, our learned $\tau$-periodic schedule achieves $\epsilon_\tau = 0.68$ for $\tau=15$, which is expected to lead to faster consensus.
To illustrate, we report the convergence results for a representative epoch (epoch 7). As shown in Figure~\ref{fig:consensus-metropolis}, \textbf{\texttt{GraT-Diff}} exhibits a markedly steeper linear decay in all three metrics than either of the baselines, with theoretical justification provided in the \textbf{Supplementary Information}. In other words, to reach the same error tolerance relative to the centralized solution, our method requires far fewer iterations, which translates directly into a substantial reduction in communication overhead. The gain stems from the reduced contraction factor $\epsilon_\tau$ achieved by the learned, $\tau$-periodic schedule of mixing weights, which enables faster information diffusion over the same underlying network.

\subsection*{Structural Advantages over Standard Decentralized GNSS Network}
We provide a side-by-side comparison between our framework and the decentralized ADMM-based GNSS architecture of Khod.\&Teunissen~\cite{khodabandeh2019distributed}, summarized in Table~\ref{tab2}. The approach in Khod.\&Teunissen~\cite{khodabandeh2019distributed} typically relies on a hand-crafted, overlapping set of constraint nodes to enable intra-cluster hard averaging and promote convergence in practice. For large, time-varying networks such as in Figure~\ref{fig:igs-graph}, however, constructing and maintaining these overlapping clusters can be fragile, and may require manual retuning as the topology or receiver set evolves. By contrast, \textbf{\texttt{GraT-Diff}} uses a unified GT framework with learned mixing weights and does not require any explicit cluster construction or overlap design, making it inherently more robust to graph variations.

\begin{table}[t]
    \caption{Comparison with state-of-the-art decentralized GNSS framework}
    \label{tab2}
    \centering
    \begin{tabular} {l c c } 
        \toprule
        Framework & Ours & Khod.\&Teunissen \cite{khodabandeh2019distributed} \\
        \midrule
        Constraint set &  \pref{N}  &  \notpref{Y} \\
        Cluster averaging & \pref{N}    & \notpref{Y}    \\
        Time-var. graph & \pref{Y} & \notpref{N}  \\
        Convergence & \pref{Y} & \pref{Y} \\
        \bottomrule
    \end{tabular}
\end{table}

More broadly, our method belongs to the class of gradient-driven algorithms: each node updates using local gradients together with neighbour state exchanges, rather than the constraint-heavy primal–dual machinery of ADMM. This lightweight structure is more tolerant to packet losses, link failures, and bandwidth fluctuations, since missing messages perturb only incremental gradient steps instead of disrupting coupled subproblems. With a single round of peer-to-peer communication per iteration and rigorous convergence guarantees on sparse, time-varying graphs, the proposed framework is particularly advantageous for next-generation regimes—such as GNSS constellations augmented by massive LEO fleets. In these data-dense, bandwidth-constrained environments, the integration of learned, graph-aware mixing policies with a controllable contraction factor $\epsilon_\tau$ provides a principled and scalable pathway for decentralized global processing.

\section*{Discussion}
We have presented a fully decentralized architecture that resolves the GNSS network estimation problem using a \textbf{\texttt{GraT-Diff}} scheme. To our knowledge, this represents the first step toward realizing decentralized GNSS networks capable of operating over time-varying graphs. Our method attains the centralized solution through strictly local computations and neighbor exchanges, with convergence established at a linear rate. Furthermore, this work investigates the impact of graph structures on convergence speed, communication overhead, and system robustness—factors that are pivotal to addressing the big-data challenges inherent in modern satellite constellations.

The validation in this paper is simulation-based. Although the network geometry is constructed from actual IGS station coordinates and the satellite geometry is derived from RINEX navigation files, the carrier-phase and code observables are generated under a controlled measurement model. This setup isolates decentralized optimization error from GNSS modeling errors and allows direct comparison with the centralized BLUE solution; as such, the results do not yet represent a full real-data demonstration. Applying the proposed method to real IGS observation files requires a standard GNSS preprocessing pipeline, including cycle-slip detection and repair, outlier rejection, multipath mitigation, elevation- or carrier-to-noise-dependent weighting, antenna phase-center corrections, and refined atmospheric modeling. These factors affect the construction of the local matrices $\mathbf{A}_r$, $\mathbf{B}_r$, $\mathbf{Q}_r$, and $\mathbf{y}_r$, but do not alter the decentralized structure of the proposed algorithm. Furthermore, the present framework operates on the float ambiguity estimation problem and does not explicitly include integer ambiguity resolution. A real-data evaluation with robust preprocessing, distributed ambiguity management, and operational correction-product generation therefore remains an important direction for future work.

In addition, practical deployments must account for system-level impairments such as communication delays, asynchronous updates, bandwidth constraints, and message quantization. These factors may affect the effective information-mixing rate and introduce stale or incomplete updates, motivating extensions based on delay-robust gradient tracking, asynchronous diffusion, compressed communication, and adaptive graph scheduling. The present work focuses on batched network estimation within each epoch. Another important direction is to extend the framework to fully streaming operation, where gradient tracking and consensus are performed epoch-by-epoch as new GNSS measurements arrive. Such an extension would exploit the temporal structure of GNSS data, further reduce latency, and support continual decentralized learning. Since the proposed framework is inherently compatible with fast-switching topologies, it may also serve as a useful building block for future spaceborne, onboard, and hybrid terrestrial--space GNSS networks.

\section*{Methods}
\subsection*{Notation and Conventions}
Unless stated otherwise, we use the following notation throughout. Non-bold lowercase and uppercase letters denote scalars; bold lowercase letters denote column vectors; bold uppercase letters denote matrices. 
\(\mathbf{1}\) is the all-ones vector of appropriate size, and \(\mathbf{I}_d\) is the \(d\times d\) identity matrix. \(\mathrm{diag}(\cdot)\) returns a diagonal matrix, 
\(\mathrm{blkdiag}(\cdot)\) returns a block-diagonal matrix, and \(\mathrm{col}(\cdot)\) denotes vertical (column-wise) stacking of its arguments, respectively. 
We use \(\mathbb{E}(\cdot)\) and \(\mathbb{C}(\cdot,\cdot)\) to denote the expectation and covariance operators. 
The Hadamard and Kronecker products are denoted by \(\odot\) and \(\otimes\), respectively. 
We write \(\|\cdot\|\) for the Euclidean norm of a vector; for matrices, it denotes the spectral norm, $\|\cdot\|_{F}$ is the Frobenius norm, and $\|\cdot\|_{*}$ denotes the nuclear norm.

\subsection*{Observation Model and Decentralized Formulation}

The linearized carrier-phase and pseudo-range (code) GNSS observation equations are given by
\begin{equation}
\resizebox{.91\hsize}{!}{$
\begin{aligned}
\Delta \phi_{r, j}^s(i) = & \mathbf{g}_r^s(i)^{\mathrm{T}} \Delta \mathbf{p}_r(i)+\mathrm{d} t_r(i)+\lambda_j \delta_{r, j}(i) -\mathrm{d} t^s(i) -\lambda_j \delta_j^s(i) \\
&+ v_r^s(i)\tau_r(i) -\mu_j l_r^s(i)+\lambda_j a_{r, j}^s + \varepsilon_{\phi,r,j}^s(i), \\
\Delta \rho_{r, j}^s(i) = & \mathbf{g}_r^s(i)^{\mathrm{T}} \Delta \mathbf{p}_r(i)+\mathrm{d} t_r(i)+d_{r, j}(i)  -\mathrm{d} t^s(i) -d_j^s(i) \\
&+ v_r^s(i)\tau_r(i) +\mu_j l_r^s(i)+\varepsilon_{\rho,r,j}^s(i).
\end{aligned}
$}
\label{eq:gnss}
\end{equation}
Here,
\begin{itemize}
    \item $\Delta \phi_{r, j}^s(i)$ and $\Delta \rho_{r, j}^s(i)$ represent the observed-minus-computed undifferenced carrier-phase and pseudo-range observations, respectively.
    \item The indices $r$, $s$, $j$, and $i$ denote the receiver, satellite, frequency, and epoch indices, respectively.
    \item $\mathbf{g}_r^s(i)$ is the line-of-sight unit vector.
    \item $\Delta \mathbf{p}_r(i)$ denotes the incremental receiver position.
    \item $\mathrm{d} t_r(i)$ and $\mathrm{d} t^s(i)$ are the receiver and satellite clock offsets, respectively.
    \item $\delta_{r, j}(i)$ and $\delta_j^s(i)$ denote the receiver and satellite carrier-phase hardware biases, respectively.
    \item $d_{r, j}(i)$ and $d_j^s(i)$ denote the receiver and satellite pseudo-range hardware biases, respectively.
    \item $\tau_r(i)$ is the zenith tropospheric delay, with $v_r^s(i)$ being the corresponding tropospheric mapping function coefficient.
    \item $l_r^s(i)$ refers to the ionospheric delay on the first frequency. The coefficient $\mu_j = \lambda_j^2 / \lambda_1^2$ is defined using $\lambda_j$, the wavelength corresponding to frequency $j$.
    \item $a_{r, j}^s$ denotes the carrier-phase ambiguity.
    \item Finally, $\varepsilon_{\phi,r,j}^s(i)$ and $\varepsilon_{\rho,r,j}^s(i)$ denote the observation noise terms.
\end{itemize}

Consider a receiver network with nodes $r=1,\ldots,R$. Each node $r$ collects an observation vector $\mathbf{y}_r$ consisting of observed-minus-computed, undifferenced carrier-phase and pseudo-range measurements. The measurement noise $\mathbf{n}_r$ is assumed to have zero mean, $\mathbb{E}(\mathbf{n}_r)=\mathbf{0}$, and covariance $\mathbf{Q}_r$, with $\mathbb{C}(\mathbf{n}_r,\mathbf{n}_{r'})=\delta_{rr'}\,\mathbf{Q}_r$, where $\delta_{rr'}$ denotes the Kronecker delta.

In the present simulations, inter-system bias states between different GNSS constellations are not modeled explicitly. Instead, these effects are treated implicitly through the simplified receiver clock and hardware-bias terms adopted in the observation model. In addition, the stochastic observation model employs simplified weighting and does not explicitly incorporate constellation-dependent stochastic modeling across GPS, GLONASS, Galileo, and BeiDou observations. These assumptions allow the decentralized optimization and graph-based information-mixing behavior to be evaluated under controlled conditions while preserving the general multi-GNSS observation structure.

The estimation variables at receiver $r$ are organized into a local state $\mathbf{x}_r$ and a set of satellite-specific parameters collected in a global state $\mathbf{z}$. Different system models may be adopted. For instance, when some receivers serve as reference stations, their positions can be treated as known. Additional satellite-related parameters, such as orbit corrections, may also be included in the global state vector. Regardless of the specific modeling choices, the parameters can always be organized into a local state $\mathbf{x}_r$ and a global state $\mathbf{z}$. Under this representation, the linearized observation model admits the compact form \eqref{eq:obs_r} used in the \textbf{Results} section. It is important to note that $\mathbf{x}_r$ and $\mathbf{z}$ represent identifiable parameterizations derived from the original undifferenced observation model in \eqref{eq:gnss}. These parameterizations preserve the physical interpretability of the underlying parameters while ensuring full rank of the network design matrix. A detailed discussion is omitted here for brevity, we refer interested readers to \cite{teunissen2010ppp,hou2023decentralized,odijk2016estimability}.

A scalable decentralized formulation can be written as follows:
\begin{equation}
\begin{aligned}
& \underset{\{\mathbf{x}_r\}^{R}_{r=1},\ \mathbf{z}}{\text{minimize}} \quad f(\{\mathbf{x}_r\}^{R}_{r=1},\mathbf{z})=\sum_{r=1}^{R}f_{r}(\mathbf{x}_{r},\mathbf{z}),\\
 &\mathrm{where} \quad f_{r}(\mathbf{x}_{r},\mathbf{z}) = \frac{1}{2}\big\|\,\mathbf{A}_{r}\mathbf{x}_{r}+\mathbf{B}_{r}\mathbf{z}-\mathbf{y}_r\,\big\|_{\mathbf{Q}_{r}^{-1}}^2.
\end{aligned}
\label{eq:obs_dec}
\end{equation}
Under this formulation, each node minimizes its local objective $f_{r}(\mathbf{x}_{r},\mathbf{z})$ while exchanging only compact information with neighboring nodes.

\subsection*{Learning Graph-Aware Mixing Sequence}
The offline learning stage does not learn GNSS parameters from data. Instead, it learns how information should be mixed across the receiver graph. Mixing weight sequence $\theta$ can be learned by solving the surrogate problem
\begin{equation}
    \begin{aligned}
\underset{\theta=\{\mathbf{W}^{(l)}\}_{l=0}^{\tau-1}}{\text{minimize}}\quad
& \mathcal{L}=\|\mathrm{NN}(\cdot|{\theta})- \tfrac{1}{R}\,\mathbf{1}\mathbf{1}^{\top}\|_F^2 \\[2pt]
\text{subject to}\quad
& \mathrm{NN}(\cdot|{\theta})
= \mathbf{W}^{(\tau-1)}\cdots \mathbf{W}^{(1)}\mathbf{W}^{(0)},\\
& \mathbf{W}^{(l)}\mathbf{1}=\mathbf{1},\ \mathbf{1}^\top\mathbf{W}^{(l)}=\mathbf{1}^\top,\\
& [\mathbf{W}^{(l)}]_{rq}=0 \ \text{if } (r,q)\notin\mathcal{E}^{(l)},\\
& \mathbf{W}^{(l)}\ge 0,\qquad l=0,\dots,\tau-1.
\end{aligned}
\label{train}
\end{equation}
As we can see, gradients \(\{\nabla_{\mathbf{W}^{(l)}}\mathcal{L}\}_{l=0}^{\tau-1}\) could be computed by backpropagation
through the product. The main difficulty is enforcing, at every layer \(l\),
(i) nonnegativity, (ii) the graph-induced sparsity pattern, and (iii) double
stochasticity. To address this, we apply an exponentiated-gradient
(multiplicative) step—preserving nonnegativity—followed by a masked
Sinkhorn–Knopp projection \cite{knight2008sinkhorn,cuturi2013sinkhorn}—preserving sparsity and imposing double
stochasticity. Given the current loss \(\mathcal{L}\), the updates are
\begin{equation}
\begin{aligned}
 \widehat{\mathbf{W}}^{(l)} &=\mathbf{W}^{(l)}
\odot 
\exp\Big(-\eta\,\nabla_{\mathbf{W}^{(l)}}
\mathcal{L}\Big),\\
\widetilde{\mathbf{W}}^{(l)} &= \widehat{\mathbf{W}}^{(l)}\odot \mathbf{M}^{(l)}, \\
\mathbf{W}^{(l)} &=\mathbf{D}^{(l)}_{1}\,\widetilde{\mathbf{W}}^{(l)}\,\mathbf{D}^{(l)}_{2}.
\end{aligned}
\label{sinkhorn}
\end{equation}
Here, $\eta>0$ is the step size for the multiplicative update. $\mathbf{M}^{(l)}\in\{0,1\}^{R\times R}$ serves as a topological mask for $\mathcal{G}^{(l)}$, ensuring via Hadamard product that unconnected entries in $\widetilde{\mathbf{W}}^{(l)}$ are zeroed out while existing communication links and self-loops are preserved. Note that this masking framework offers flexibility for dynamic environments: $\mathbf{M}^{(l)}$ can either be updated explicitly to match changing communication protocols, or kept static while applying a magnitude threshold to the weights; the latter automatically prunes negligible connections to reduce communication overhead. Finally, $\mathbf{D}^{(l)}_{1}$ and $\mathbf{D}^{(l)}_{2}$ are diagonal matrices that scale the rows and columns to ensure the final mixing matrix $\mathbf{W}^{(l)}$ is doubly stochastic \footnote{The matrices $\mathbf{D}^{(l)}_{1}$ and $\mathbf{D}^{(l)}_{2}$ are not known a priori. $\mathbf{W}^{(l)} =\mathbf{D}^{(l)}_{1}\,\widetilde{\mathbf{W}}^{(l)}\,\mathbf{D}^{(l)}_{2}$ can be derived by alternating row and column normalizations on $\widetilde{\mathbf{W}}^{(l)}$: \(\widetilde{\mathbf{W}}^{(l)}\gets\mathrm{diag}(\widetilde{\mathbf{W}}^{(l)}\mathbf{1})^{-1}\widetilde{\mathbf{W}}^{(l)}\), \(\widetilde{\mathbf{W}}^{(l)}\gets\widetilde{\mathbf{W}}^{(l)}\mathrm{diag}(\mathbf{1}^\top\widetilde{\mathbf{W}}^{(l)})^{-1}\). After a few rounds (or upon convergence), set $\mathbf{W}^{(l)}\gets\widetilde{\mathbf{W}}^{(l)}$.}. Algorithm~\ref{alg:SINK} summarizes the procedure; see Fig.~\ref{fig_1} left for a visual recap.

To produce Figure~\ref{fig:training}, we choose the step size \(\eta=8\times10^3\), the maximum iteration \(\mathcal{K}=3\times10^{4}\), and uniform initialization $[0,1]$ on each $\mathbf{W}^{(l)}$, followed by the lazy transform $\frac{\mathbf{W}^{(l)}+\mathbf{I}}{2}$ as previously mentioned.

\begin{algorithm}[t]
\caption{Learning Mixing Sequence via Masked Sinkhorn--Knopp}
\label{alg:SINK}
\begin{algorithmic}[1]
\Require Window length \(\tau\); mask matrices \(\{\mathbf{M}^{(l)}\}_{l=0}^{\tau-1}\in\{0,1\}^{R\times R}\); stepsize \(\eta>0\); maximum training iterations \(\mathcal{K}\); initialization \(\{\mathbf{W}^{(l)}\ge 0\}\) with support \(\mathbf{M}^{(l)}\)
\Ensure Mixing matrices \(\theta=\{\mathbf{W}^{(l)}\}_{l=0}^{\tau-1}\)
\For{\(\kappa=0,1,\dots,\mathcal{K}-1\)}
  \State \(\mathrm{NN}(\cdot|{\theta})\gets \mathbf{W}^{(\tau-1)}\cdots \mathbf{W}^{(0)}\) \Comment{forward}
  \State \(\mathcal{L}\gets \|\mathrm{NN}(\cdot|{\theta})-\tfrac{1}{R}\,\mathbf{1}\mathbf{1}^{\top}\|_F^2\) \Comment{compute loss}
  \ForAll {\(l\in\{0,\dots,\tau-1\}\)} \Comment{in parallel}
    \State \(\widehat{\mathbf{W}}^{(l)} \gets \mathbf{W}^{(l)} \odot \exp\!\big(-\eta\,\nabla_{\mathbf{W}^{(l)}} \mathcal{L}\big)\) \Comment{backward}
    \State \(\widetilde{\mathbf{W}}^{(l)} \gets \widehat{\mathbf{W}}^{(l)} \odot \mathbf{M}^{(l)}\) \Comment{enforce sparsity}
    \State \(\mathbf{W}^{(l)} \gets\mathbf{D}^{(l)}_{1}\,\widetilde{\mathbf{W}}^{(l)}\,\mathbf{D}^{(l)}_{2}\) \\ \Comment{Sinkhorn--Knopp scaling}
  \EndFor
  \State \(\theta \gets \{\mathbf{W}^{(l)}\}_{l=0}^{\tau-1}\) \Comment{collecting}
\EndFor
\end{algorithmic}
\end{algorithm}

\begin{remark}[Implicit total support condition for the masked Sinkhorn--Knopp projection \cite{knight2008sinkhorn}] 
The Sinkhorn--Knopp projection onto the set of (masked) doubly stochastic matrices exists, and the alternating normalization converges, if and only if the input nonnegative matrix has \emph{total support}. In our setting, it suffices to check this property for the mask $\mathbf{M}^{(l)}$, since $\widetilde{\mathbf{W}}^{(l)}$ is strictly positive on the support of $\mathbf{M}^{(l)}$ and zero elsewhere via Hadamard product. In our settings (namely, $\mathbf{M}^{(l)}$ is symmetric and $[\mathbf{M}^{(l)}]_{rr}=1$ for all $r$), every positive entry of $\mathbf{M}^{(l)}$ lies on a positive diagonal: diagonal entries belong to the identity permutation, and off-diagonal pairs $(r,q)$ and $(q,r)$ belong to a permutation that swaps $r$ and $q$ while fixing all other indices. Hence $\mathbf{M}^{(l)}$ has total support, and so does $\widetilde{\mathbf{W}}^{(l)}$. It follows that the masked Sinkhorn--Knopp scaling exists, converges and preserves the sparsity pattern (acting blockwise if the mask decomposes from disconnected graphs).
\end{remark}

\begin{remark}[Spectral constraints for heterogeneity from partial visibility]\label{rm2} 
It is often desirable to constrain the eigenvalues of $\mathbf{W}^{(l)}$ to be nonnegative to mitigate large data heterogeneity and to allow for larger step sizes when using the learned $\theta$ in decentralized optimization, without collapsing the main diagonal entries (zeros will let the smallest eigenvalue hit the negative unity) \cite{nedic2017achieving,koloskova2021improved}. In GNSS networks, limited satellite visibility creates distinct local observation data, i.e., each node constructs its own matrices $\mathbf{A}_r$ and $\mathbf{B}_r$ for \eqref{eq:obs_dec} using only the satellites and signals it observes; consequently, a collapsed main diagonal (zero self-loop) is detrimental, as it implies a node completely discards its own unique construction in favor of neighbor averages. A simple way to solve this is to replace $\mathbf{W}^{(l)}$ with its lazy version $\frac{\mathbf{W}^{(l)} + \mathbf{I}}{2}$ in $\mathrm{NN}(\cdot|\theta)$, which can be interpreted as adding a residual connection \cite{he2016deep} to the deep linear network. A more nuanced approach is to augment the loss $\mathcal{L}$ with the regularizer $\Omega(\mathbf{W}^{(l)}) = \|\mathbf{W}^{(l)}\|_{*} - \operatorname{tr}(\mathbf{W}^{(l)})$, which simplifies to $\sum_{r=1}^{R} (|\lambda_r| - \lambda_r)$ to explicitly penalize negative eigenvalues. Crucially, this term is amenable to standard backpropagation: the trace is linear, and the nuclear norm $\|\mathbf{W}^{(l)}\|_{*}$ is differentiable almost everywhere with gradient $\frac{\partial \|\mathbf{W}^{(l)}\|_{*}}{\partial \mathbf{W}^{(l)}} = \mathbf{W}^{(l)} \big((\mathbf{W}^{(l)})^{\top}\mathbf{W}^{(l)}\big)^{-\frac{1}{2}}$ \cite{watson1992characterization} under standard regularity conditions.
\end{remark}

\subsection*{Diffusion Algorithm}
Equipped with the learned weight sequence $\theta$, the \textbf{\texttt{GraT-Diff}} algorithm solves \eqref{eq:obs_dec} iteratively. At iteration $k$, each receiver node $r$ maintains a state variable $\mathbf{z}_{r}^{k}$ and a gradient tracker $\mathbf{g}_{r}^{k}$. The update cycle proceeds in three stages: consensus-based descent, local variable recovery, and gradient tracking adaptation.

First, the receivers update $\mathbf{z}_{r}^{k+1}$ by combining neighbor information with the descent direction provided by the tracker. Second, the local auxiliary variable $\mathbf{x}_{r}^{k+1}$ is recovered via the closed-form solution to the local subproblem $\min_{\mathbf{x}_{r}} f_{r}(\mathbf{x}_{r}|\mathbf{z}_{r}^{k+1})$. Finally, the tracker $\mathbf{g}_{r}^{k+1}$ is updated to estimate the global gradient mean. The full update laws are:
\begin{equation}
\begin{aligned}
     \mathbf{z}_{r}^{k+1}&= \sum _{q\in\mathcal{N}^{(l)}_{r}} w^{(l)}_{qr}(\mathbf{z}_{q}^{k}-\mu\mathbf{g}^{k}_{q}),\\
     \mathbf{x}_{r}^{k+1}& = (\mathbf{A}^\top_{r}\mathbf{Q}_{r}^{-1}\mathbf{A}_{r})^{-1}\mathbf{A}^\top_{r}\mathbf{Q}_{r}^{-1}(\mathbf{y}_{r}-\mathbf{B}_{r}\mathbf{z}_{r}^{k+1}), \\
      \mathbf{g}_{r}^{k+1}& =\sum _{q\in\mathcal{N}^{(l)}_{r}}w^{(l)}_{qr}\mathbf{g}^{k}_{q} + \nabla_{\mathbf{z}} f_{r}(\mathbf{x}_{r}^{k+1},\mathbf{z}_{r}^{k+1})-\nabla_{\mathbf{z}} f_{r}(\mathbf{x}_{r}^{k},\mathbf{z}_{r}^{k}),
\end{aligned}
 \label{eq:diffusionGT}
\end{equation}
where $l=k\%\tau$ (i.e., $k\bmod\tau$), $\mu>0$ is the stepsize, and the tracker is initialized as $\mathbf{g}^{0}_{r}=\nabla_{\mathbf{z}} f_{r}(\mathbf{x}_{r}^{0},\mathbf{z}_{r}^{0})$. 

Crucially, the middle line in~\eqref{eq:diffusionGT} represents the explicit solution to $\mathbf{x}_{r}^{k+1}=\arg\min_{\mathbf{x}_r} f_r(\mathbf{x}_r |\mathbf{z}_r^{k+1})$. It is worth noting that while the original local loss in \eqref{eq:obs_dec} may be ill-posed with respect to $\mathbf{x}_r$ alone, conditioning on the consensus variable $\mathbf{z}_r^{k+1}$ acts as a regularizer, rendering the subproblem strictly convex and uniquely solvable. By cyclically applying the weights from $\theta$, the algorithm leverages the block contraction property established in \eqref{eq:prod}, thereby accelerating both state consensus and the convergence of the gradient tracker to the true global gradient. A high-level implementation is given in Algorithm~\ref{alg:gt-atc}; also see Fig.~\ref{fig_1} right for a visual recap.

It is worth noting that the diffusion form in~\eqref{eq:diffusionGT} is inherently robust to unintended communication-link interruptions, during which the scheduled mixing matrices may no longer be applied exactly. In such situations, each node dynamically constructs a fallback mixing matrix $[w^{(l)}_{rq}]$ using the currently active communication graph, i.e., the set of neighbors from which messages are successfully received at the current iteration, which is equivalent to this fallback matrix following a standard heuristic Metropolis construction as shown in~\cite{nedic2017achieving}. Therefore, a temporarily unavailable link simply receives zero weight during the affected iterations, while the diagonal entries are adjusted to preserve stochasticity. No explicit reinitialization or retransmission mechanism is required. The propagated node states continue evolving from their current values using the active communication graph, such that historical information is retained implicitly in the local states, whereas missed messages are not buffered or replayed after reconnection. Once the link is restored, it automatically resumes the learned mixing schedule $\theta$. Such link interruptions are commonly studied in highly unreliable communication environments, where the resulting time-varying topology is used to model random communication failures rather than to optimize information exchange. Under the $B$-connected graph assumption, convergence can still be guaranteed even in this worst-case setting, although such analysis does not provide improvements in convergence rate or communication efficiency, as established in~\cite{nedic2017achieving}.

Finally, to produce Figure~\ref{fig:gt-vs-centralized} and Figure~\ref{fig:consensus-metropolis}, we choose a stepsize \(\mu=0.015\), initialize each node's state $\mathbf{z}_r^0=\mathbf{0}$ and limit the number of iterations to \(K=10^{4}\).

\begin{algorithm}[t]
\caption{\textbf{\texttt{GraT-Diff}} for Decentralized GNSS}
\label{alg:gt-atc}
\begin{algorithmic}[1]
\Require Mixing matrices $\theta=\{\mathbf{W}^{(l)}\}_{l=0}^{\tau-1}$ satisfying \eqref{eq:prod} with entries $w^{(l)}_{qr}$; stepsize $\mu>0$; maximum iterations $K$
\Require Local data $\{\mathbf{A}_r,\mathbf{B}_r,\mathbf{Q}_{r},\mathbf{y}_r\}_{r=1}^R$ and $f_r$ as in \eqref{eq:obs_dec}
\Ensure Estimates $\{\mathbf{x}_r^{K},\mathbf{z}_r^{K}\}_{r=1}^R$
\State \textbf{Initialization:} For each node $r$, choose $\mathbf{z}_r^0$ and set $\mathbf{x}_r^0 \gets (\mathbf{A}_r^\top \mathbf{Q}_{r}^{-1}\mathbf{A}_r)^{-1}\mathbf{A}_r^\top \mathbf{Q}_{r}^{-1}(\mathbf{y}_r-\mathbf{B}_r\mathbf{z}_r^0)$, $\mathbf{g}_r^0 \gets \nabla_{\mathbf{z}} f_r(\mathbf{x}_r^0,\mathbf{z}_r^0)$ 
\For{$k=0$ \textbf{to} $K-1$}
  \State $l \gets k \% \tau$ \Comment{select current topology $\mathbf{W}^{(l)}$}
  \ForAll{$r \in \{1,\ldots,R\}$} \Comment{in parallel}
    \State $\mathbf{z}_r^{k+1} \gets \sum_{q\in\mathcal{N}^{(l)}_r} w^{(l)}_{qr}(\mathbf{z}_q^k - \mu\,\mathbf{g}_q^k)$  \\ \Comment{adapt + combine state}
    \State $\mathbf{x}_r^{k+1} \gets (\mathbf{A}_r^\top \mathbf{Q}_{r}^{-1}\mathbf{A}_r)^{-1}\mathbf{A}_r^\top \mathbf{Q}_{r}^{-1}(\mathbf{y}_r-\mathbf{B}_r\mathbf{z}_r^{k+1})$  \\ \Comment{local update by exact minimizer}
    \State $\mathbf{g}_r^{k+1} \gets \sum_{q\in\mathcal{N}^{(l)}_r} w^{(l)}_{qr}\,\mathbf{g}_q^k + \nabla_{\mathbf{z}} f_{r}(\mathbf{x}_{r}^{k+1},\mathbf{z}_{r}^{k+1})-\nabla_{\mathbf{z}} f_{r}(\mathbf{x}_{r}^{k},\mathbf{z}_{r}^{k})$  \Comment{combine tracker + gradient tracking}
  \EndFor
\EndFor
\end{algorithmic}
\end{algorithm}

\begin{remark}[Analytical reduction]
   As each update sets $\mathbf x_r^{k}$ to its exact minimizer, it is convenient for the \emph{analysis} to eliminate $\mathbf x_r$ and rewrite the local objective $f_{r}(\mathbf{x}_{r},\mathbf{z})$ solely as a function of $\mathbf z$:
\begin{equation}
\begin{aligned}
f_{r}(\mathbf{z})&= \frac{1}{2}\big\|\,\mathbf{C}_{r}\big(\mathbf{B}_{r}\mathbf{z}-\mathbf{y}_r\big)\big\|_{\mathbf{Q}_{r}^{-1}}^2, \\
\mathbf{C}_{r}&\triangleq \mathbf{I}-\mathbf{A}_{r}(\mathbf{A}^\top_{r}\mathbf{Q}_{r}^{-1}\mathbf{A}_{r})^{-1}\mathbf{A}^\top_{r}\mathbf{Q}_{r}^{-1}.
\end{aligned}
\label{eq:close2}
\end{equation}
Its gradient w.r.t $\mathbf z$ is therefore
\begin{equation}
\nabla f_r(\mathbf z)
= \mathbf B_r^\top \mathbf C_r^\top \mathbf Q_r^{-1}\mathbf C_r\big(\mathbf B_r \mathbf z - \mathbf y_r\big).
\label{eq:local-grad}
\end{equation}
Note that in the implementation, $\mathbf x_r^{k+1}$ is still computed each round as in~\eqref{eq:diffusionGT} to evaluate the local gradient and to enable real-time monitoring or enforcement of application-specific quantities that depend on $\mathbf x_r$. Furthermore, the precision matrix can be derived from \eqref{eq:local-grad} as the aggregation $\mathbf{\Sigma^{-1}}=\sum_r \mathbf{B}_r^{\top} \mathbf{C}_r^{\top} \mathbf{Q}_r^{-1} \mathbf{C}_r \mathbf{B}_r$, which corresponds to the centralized BLUE information matrix.
 \label{def:r2}
\end{remark}
Note that Remark \ref{def:r2} provides an elimination equivalence for the objective function, which serves as the foundation for our theoretical analysis. For a rigorous investigation into the linear convergence of the proposed method, we refer interested readers to the \textbf{Supplementary Information}.

\section*{Acknowledgements}
The authors would like to thank Prof. Peter J. G. Teunissen for his insightful discussions during the development of the distributed GNSS framework. His conceptual suggestions greatly contributed to the early stages of this work. This work was supported by the King Abdullah University of Science and Technology (KAUST) Office of Sponsored Research (OSR) under Award No. RFS-CRG12-2024-6478. 

\section*{Author Contributions}
Xue Xian Zheng carried out the algorithmic development and implementation, performed the theoretical analysis, ran simulations, generated figures, and co-drafted the initial manuscript.
Xing Liu led the modelling/problem formulation, implemented the methods and coding, coordinated the project, co-drafted the initial manuscript, and served as the corresponding author.
Tareq Y. Al-Naffouri supervised the project and revised the manuscript. 
All authors discussed the results and approved the final manuscript.

\section*{Competing Interests}
Tareq Y. Al-Naffouri serves as a Guest Editor for the \textit{npj Wireless Technology} Collection “Advances in Wireless Positioning and Sensing.” He had no role in the peer review, editorial decision-making, or handling of this manuscript, which was managed independently by another editor with no competing interests. The remaining authors declare no competing interests.

\section*{Data Availability}The simulated datasets and the accompanying scenario generator will be deposited in a public repository upon publication (DOI to be provided). The input data consist of IGS station coordinates and multi-GNSS broadcast navigation files, which are publicly available from the respective providers and mirror archives.

\section*{Code Availability}A reference implementation of \textbf{\texttt{GraT-Diff}}, including Algorithm 1\&2 to reproduce the figures, will be deposited in a public repository upon publication (DOI to be provided) under an open source license. Configuration files and environment/requirements files will be included to facilitate reproduction.

{
\bibliographystyle{IEEEtran}
\bibliography{reg}
}

%

\clearpage

\twocolumn[{%
\vspace*{3em}

\centering
{\normalfont\sffamily\Huge
Supplementary Information: Proof of Linear\\[0.25em]
Convergence for \textbf{\texttt{GraT-Diff}}\par}

\vspace{5em}

\rule{0.22\textwidth}{0.4pt}
\hspace{1em}
{\large$\blacklozenge$}
\hspace{1em}
\rule{0.22\textwidth}{0.4pt}

\vspace{4em}
}]

\vspace{1em}

\setcounter{equation}{0}
\renewcommand{\theequation}{S\arabic{equation}}
\renewcommand{\appendixname}{Note}
\renewcommand{\theremark}{S1}
\renewcommand{\thefigure}{S1}

\IEEEdisplaynontitleabstractindextext

%


In this note, we present a gentle and self-contained convergence analysis of the proposed \textbf{\texttt{GraT-Diff}} algorithm, tailored to the GNSS setting. The development proceeds in three steps: (i) we state the convergence target and assumptions; (ii) we derive the key error recursions and supporting lemmas; and (iii) we establish the main linear-convergence result and discuss the associated trade-offs and parameter choices.

Before proceeding with these steps, it is helpful to establish a concise notation. Recalling Remark 3, let each local vector be denoted by $\mathbf z_r\in\mathbb R^d$. This allows us to express the analytical network recursion of (7) compactly as:
\begin{equation}
\begin{aligned}
\mathcal{Z}^{k+1} &= \mathcal{W}^k\big(\mathcal{Z}^{k}-\mu\,\mathit{G}^{k}\big),\\
\mathit{G}^{k+1} &= \mathcal{W}^k\,\mathit{G}^{k} + \nabla \mathcal{F}^{\,k+1}-\nabla \mathcal{F}^{\,k},
\end{aligned}
\label{eq:gt_vec}
\end{equation}
where the augmented quantities are
\begin{equation}
\begin{aligned}
\mathcal{Z}^{k} & \triangleq \mathrm{col}\big(\mathbf{z}_1^{k},\ldots,\mathbf{z}_R^{k}\big)\in\mathbb{R}^{Rd},\\
\mathit{G}^{k} & \triangleq \mathrm{col}\big(\mathbf{g}_1^{k},\ldots,\mathbf{g}_R^{k}\big)\in\mathbb{R}^{Rd},\\
\mathcal{W}^k &\triangleq \mathbf{W}^{(k\%\tau)}\otimes \mathbf{I}_d\in\mathbb{R}^{Rd\times Rd},\\
\nabla \mathcal{F}^{\,k} &\triangleq \mathrm{col}\big(\nabla f_1(\mathbf{z}_1^{k}),\ldots,\nabla f_R(\mathbf{z}_R^{k})\big)\in\mathbb{R}^{Rd}.
\end{aligned}
\label{eq:together}
\end{equation}
In addition, define the network averages (centroids) of $\mathcal Z^{k}$, $\mathit G^{k}$, and $\nabla \mathcal F^{\,k}$ as
\begin{equation}
\begin{aligned}
\bar{\mathbf{z}}^{k} &\triangleq \tfrac{1}{R}\sum \mathbf{z}_r^{k} = \tfrac{1}{R}\big(\mathbf{1}^{\top}\!\otimes\!\mathbf{I}_d\big)\mathcal{Z}^{k}\in\mathbb{R}^{d},\\
\bar{\mathbf{g}}^{k} &\triangleq \tfrac{1}{R}\sum \mathbf{g}_r^{k} = \tfrac{1}{R}\big(\mathbf{1}^{\top}\!\otimes\!\mathbf{I}_d\big)\mathit{G}^{k}\in\mathbb{R}^{d},\\
\overline{\nabla f}^{\,k} &\triangleq \tfrac{1}{R}\sum \nabla f_r(\mathbf{z}_r^{k}) =  \tfrac{1}{R}\big(\mathbf{1}^{\top}\!\otimes\!\mathbf{I}_d\big)\nabla \mathcal{F}^{\,k}\in\mathbb{R}^{d}.
\end{aligned}
\label{eq:avere_vec}
\end{equation}

\section*{Necessary Assumptions}
The analysis of \eqref{eq:gt_vec} aims to show that $\mathcal{Z}^{k} \to \mathbf{1} \otimes \mathbf{z}^\star$ as $k \to \infty$ (under suitable norms, for example \textbf{MSD} $\Vert \mathcal{Z}^{k} - \mathbf{1} \otimes \mathbf{z}^{\star} \Vert^{2}/R$), where $\mathbf{z}^\star$ denotes the centralized minimizer. We base our analysis on a standard GT framework within the smooth and strongly convex regime, which allows us to characterize convergence rates in terms of condition numbers. The standing assumptions are formalized below.

\begin{assumption}[Strong convexity and smoothness] 
\label{as:smooth-strong}Each local loss function $f_r(\mathbf{z})$ from (8) is $L$-smooth with $L=\sup_r\lambda_{\max}(\mathbf{H}_{r})$ and Hessian $\mathbf{H}_r = \nabla^2 f_r(\mathbf{z}) \succeq \mathbf{0}$.
The global objective $F(\mathbf{z})=\tfrac{1}{R}\sum_{r=1}^R f_r(\mathbf{z})$ is $m$-strongly convex with $m = \lambda_{\min}(\mathbf{H})$ and aggregate Hessian
$\mathbf{H}=\frac{1}{R}\sum_{r=1}^R \mathbf{H}_r \succ \mathbf{0}$. $Q=L/m\ge1$ is the condition number.
\end{assumption}

\begin{assumption}[Block contraction over $\tau$ steps]
\label{as:stochastic} For all $l$, $\mathbf{W}^{(l)}$ is doubly stochastic (i.e., $\,\mathbf{W}^{(l)}\mathbf{1}=\mathbf{1},\ \mathbf{1}^{\mathsf T}\mathbf{W}^{(l)} \!=\! \mathbf{1}^{\mathsf T}$)
and respects the sparsity pattern of $\mathcal{G}^{(l)}$. There exists $\epsilon_\tau\!\in\!(0,1)$ such that
$\big\|\mathbf{W}^{(\tau-1)}\!\cdots\!\mathbf{W}^{(0)} - \tfrac{1}{R}\,\mathbf{1}\mathbf{1}^{\mathsf T}\big\| = \epsilon_\tau$ as in (2).
\end{assumption}

\section*{Error Recursion and Lemmas}
Rather than working directly with the limit $\mathcal{Z}^{k}\!\to\!\mathbf{1}\!\otimes\!\mathbf{z}^\star$, we introduce the consensus projector
$\mathcal{J}=\tfrac{1}{R}\,(\mathbf{1}\mathbf{1}^{\mathsf T}) \otimes \mathbf{I}_d \in \mathbb{R}^{Rd\times Rd}$ and its orthogonal complement
$\mathcal{J}_\perp=\mathbf{I}_{Rd}-\mathcal{J}\in\mathbb{R}^{Rd\times Rd}$, and define the error quantities:
\begin{equation}
\begin{aligned}
    & \mathcal{C}^k = \mathcal{J}_\perp \mathcal{Z}^k,\qquad
\mathcal{T}^k=\mathcal{J}_\perp \mathit{G}^k, \\
&\mathbf{s}^{k}= \bar{\mathbf{z}}^{k}-\mathbf{z}^\star 
\quad\text{and}\quad
\Delta^k = \bar{\mathbf{g}}^k - \nabla F(\bar{\mathbf{z}}^k).
\end{aligned}
\label{notations}
\end{equation}
Here, $\mathcal{C}^k$ (\textbf{network disagreement}) measures deviation from the consensus subspace; $\mathcal{C}^k=\mathbf{0}$ iff all agents share the same iterate.
$\mathcal{T}^k$ (tracking disagreement) measures disagreement among gradient trackers; $\mathcal{T}^k=\mathbf{0}$ iff the trackers are consensual.
$\mathbf{s}^k$ (average optimality error) vanishes iff the network average equals the optimizer $\mathbf{z}^\star$.
$\Delta^k$ (inexact gradient error) is the bias of $\bar{\mathbf{g}}^k$ (\textbf{averaged gradient}) at $\bar{\mathbf{z}}^k$; $\Delta^k=\mathbf{0}$ under exact tracking or noise-free gradients.

With these notations, convergence to a consensual optimizer is equivalent to the joint vanishing of the errors. Under suitable norms, we have
\begin{equation}
\mathcal{C}^k \to \mathbf{0},\quad \mathcal{T}^k \to \mathbf{0},\quad \mathbf{s}^k \to \mathbf{0}\quad \mathrm{as} \quad (k\to\infty),
\end{equation}
in the presence of gradient error $\Delta^k$. Based on this understanding, we derive the following lemmas.

\begin{lemma}[Average identities]
\label{lem:avg-invariants}
Under Assumption~\ref{as:smooth-strong}--\ref{as:stochastic}, the following hold for all $k\!\ge\!0$:
\begin{align}
\bar{\mathbf{z}}^{k+1} &= \bar{\mathbf{z}}^{k} - \mu\,\bar{\mathbf{g}}^{k}, \label{eq:avg-z}\\
\bar{\mathbf{g}}^{k+1} &= \bar{\mathbf{g}}^{k} + \overline{\nabla f}^{k+1} -\overline{\nabla f}^k = \overline{\nabla f}^{k+1}. \label{eq:avg-g}
\end{align}
Moreover, it follows that
\begin{equation}\label{eq:delta-bound}
\begin{aligned}
    \|\Delta^k\|
&= \left\|\overline{\nabla f}^{k}- \nabla F(\bar{\mathbf{z}}^k)\right\| \\ &= \left\|\frac1R\sum_{r=1}^R \big(\nabla f_r(\mathbf{z}_r^{k}) - \nabla f_r(\bar{\mathbf{z}}^{k})\big)\right\|
\le \frac{L}{\sqrt{R}}\,\|\mathcal{C}^k\|.
\end{aligned}
\end{equation}
\end{lemma}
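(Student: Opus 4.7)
The plan is to exploit double stochasticity to eliminate the mixing matrices under network averaging, and then combine this with the initialization of the gradient tracker to collapse telescoping terms. I first observe that the averaging operation $\bar{(\cdot)}\triangleq\tfrac1R(\mathbf{1}^\top\!\otimes\!\mathbf{I}_d)(\cdot)$ can be pushed through any $\mathcal{W}^k=\mathbf{W}^{(l)}\!\otimes\!\mathbf{I}_d$ by the Kronecker mixed-product identity and Assumption~\ref{as:stochastic}: $(\mathbf{1}^\top\!\otimes\!\mathbf{I}_d)(\mathbf{W}^{(l)}\!\otimes\!\mathbf{I}_d)=(\mathbf{1}^\top\mathbf{W}^{(l)})\!\otimes\!\mathbf{I}_d=\mathbf{1}^\top\!\otimes\!\mathbf{I}_d$. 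Applying $\tfrac1R(\mathbf{1}^\top\!\otimes\!\mathbf{I}_d)$ to the first line of \eqref{eq:gt_vec} then yields \eqref{eq:avg-z} directly, since the mixing step acts as the identity on averages and linearity takes care of the $-\mu\,\mathit{G}^k$ term.

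For \eqref{eq:avg-g}, I would apply the same averaging to the tracker recursion in \eqref{eq:gt_vec}. Double stochasticity again collapses $\tfrac1R(\mathbf{1}^\top\!\otimes\!\mathbf{I}_d)\mathcal{W}^k \mathit{G}^k=\bar{\mathbf{g}}^k$, while the finite-difference term averages to $\overline{\nabla f}^{k+1}-\overline{\nabla f}^{k}$. This gives the first equality. To obtain $\bar{\mathbf{g}}^{k+1}=\overline{\nabla f}^{k+1}$, I proceed by induction on $k$: the base case follows from the initialization $\mathbf{g}_r^0=\nabla f_r(\mathbf{z}_r^0)$ in Algorithm~\ref{alg:gt-atc}, which gives $\bar{\mathbf{g}}^0=\overline{\nabla f}^0$; and the inductive step is immediate from the already-derived recursion, since $\bar{\mathbf{g}}^{k+1}=\bar{\mathbf{g}}^k+\overline{\nabla f}^{k+1}-\overline{\nabla f}^k=\overline{\nabla f}^{k+1}$ by hypothesis.

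Finally, \eqref{eq:delta-bound} follows from \eqref{eq:avg-g}: substituting $\bar{\mathbf{g}}^k=\overline{\nabla f}^k=\tfrac1R\sum_r \nabla f_r(\mathbf{z}_r^k)$ and using $\nabla F(\bar{\mathbf{z}}^k)=\tfrac1R\sum_r \nabla f_r(\bar{\mathbf{z}}^k)$ yields the claimed representation of $\Delta^k$ as an average of per-node gradient deviations. I then apply the triangle inequality, the $L$-smoothness of each $f_r$ from Assumption~\ref{as:smooth-strong}, and Cauchy--Schwarz in the form $\tfrac1R\sum_r a_r \le \tfrac{1}{\sqrt{R}}\bigl(\sum_r a_r^2\bigr)^{1/2}$ with $a_r=\|\mathbf{z}_r^k-\bar{\mathbf{z}}^k\|$. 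Recognizing $\sum_r\|\mathbf{z}_r^k-\bar{\mathbf{z}}^k\|^2=\|\mathcal{J}_\perp\mathcal{Z}^k\|^2=\|\mathcal{C}^k\|^2$ from the definition in \eqref{notations} closes the bound. No step is truly difficult; the main place to be careful is keeping the Kronecker algebra consistent (so that the block averaging is applied on the correct side and matches the $\mathbf{I}_d$ factors) and ensuring the induction base case rests on the algorithm's stated initialization rather than an implicit assumption.
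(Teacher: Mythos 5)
Your proposal is correct and follows essentially the same route as the paper's Appendix A: averaging the network recursion, collapsing the mixing matrices via double stochasticity and the Kronecker mixed-product identity, and bounding $\|\Delta^k\|$ by $L$-smoothness plus Cauchy--Schwarz. The only cosmetic difference is that you phrase the identity $\bar{\mathbf{g}}^{k}=\overline{\nabla f}^{k}$ as an induction where the paper telescopes from the initialization, which is the same argument.
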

\begin{proof}
See Note S\ref{Note:A}.
\end{proof}

\begin{lemma}[Average descent]\label{lem:avg-descent}
Under Assumption~\ref{as:smooth-strong}, for any $\mu\in(0,1/L]$, we have
\begin{equation}\label{eq:avg-contract}
\begin{aligned}
    \|\mathbf{s}^{k+1}\|
&\le (1- m\mu)\,\|\mathbf{s}^{k}\| + \mu\,\|\Delta^k\| \le \alpha_{1}\,\|\mathbf{s}^{k}\| + \alpha_{2}\|\mathcal{C}^k\|,
\end{aligned}
\end{equation}
where 
$\alpha_{1} =1- m \mu$ and $\alpha_{2} = \mu \frac{L}{\sqrt{R}}.$

\end{lemma}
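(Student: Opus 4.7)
The plan is to derive a scalar recursion for $\|\mathbf{s}^{k+1}\|$ starting from the averaged dynamics supplied by Lemma~\ref{lem:avg-invariants}. First, I would substitute the average identity \eqref{eq:avg-z} together with the decomposition $\bar{\mathbf{g}}^{k}=\overline{\nabla f}^{\,k}=\nabla F(\bar{\mathbf{z}}^{k})+\Delta^{k}$ (which follows from \eqref{eq:avg-g} and the definition of $\Delta^{k}$ in \eqref{notations}) into $\mathbf{s}^{k+1}=\bar{\mathbf{z}}^{k+1}-\mathbf{z}^{\star}$, arriving at
\[
\mathbf{s}^{k+1} \;=\; \mathbf{s}^{k} \;-\; \mu\,\nabla F(\bar{\mathbf{z}}^{k}) \;-\; \mu\,\Delta^{k}.
\]

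Next, I would exploit the quadratic structure encoded in Assumption~\ref{as:smooth-strong}: $F$ has constant aggregate Hessian $\mathbf{H}$ with $m\mathbf{I}\preceq\mathbf{H}\preceq L\mathbf{I}$, and $\nabla F(\mathbf{z}^{\star})=\mathbf{0}$. Together these give $\nabla F(\bar{\mathbf{z}}^{k})=\mathbf{H}\,\mathbf{s}^{k}$, reducing the recursion to the affine form $\mathbf{s}^{k+1}=(\mathbf{I}-\mu\mathbf{H})\mathbf{s}^{k}-\mu\Delta^{k}$. Taking norms and applying the triangle inequality, I would then use the spectral identity $\|\mathbf{I}-\mu\mathbf{H}\|=\max_{\lambda\in[m,L]}|1-\mu\lambda|$. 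For $\mu\in(0,1/L]$, both $1-\mu L$ and $1-\mu m$ are nonnegative, and $1-\mu m\ge 1-\mu L$, so $\|\mathbf{I}-\mu\mathbf{H}\|=1-\mu m$. This establishes the first stated inequality $\|\mathbf{s}^{k+1}\|\le(1-m\mu)\|\mathbf{s}^{k}\|+\mu\|\Delta^{k}\|$.

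Finally, the second inequality follows by plugging in the bound $\|\Delta^{k}\|\le(L/\sqrt{R})\|\mathcal{C}^{k}\|$ from \eqref{eq:delta-bound}: $\mu\|\Delta^{k}\|\le\alpha_{2}\|\mathcal{C}^{k}\|$ with $\alpha_{2}=\mu L/\sqrt{R}$, and $\alpha_{1}=1-m\mu$. I expect the only mildly delicate step to be the spectral-norm bound; it crucially hinges on the quadratic, constant-Hessian structure built into Assumption~\ref{as:smooth-strong}, without which one would fall back on the looser co-coercivity contraction $\sqrt{1-2\mu mL/(m+L)}$ standard for general smooth and strongly convex $F$. Once that reduction is in place, everything else amounts to a triangle inequality and a direct invocation of Lemma~\ref{lem:avg-invariants}.
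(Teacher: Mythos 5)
Your proposal is correct and follows essentially the same route as the paper's proof: both reduce the averaged dynamics to $\mathbf{s}^{k+1}=(\mathbf{I}-\mu\mathbf{H})\mathbf{s}^{k}-\mu\Delta^{k}$ using the constant aggregate Hessian and $\nabla F(\mathbf{z}^\star)=\mathbf{0}$, bound $\|\mathbf{I}-\mu\mathbf{H}\|\le 1-\mu m$ for $\mu\le 1/L$, and finish with the triangle inequality and the bound on $\|\Delta^{k}\|$ from Lemma~\ref{lem:avg-invariants}. Your explicit remark that the contraction factor $1-\mu m$ (rather than the looser co-coercivity rate) hinges on the quadratic structure is a fair observation, but the argument itself is the same.
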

\begin{proof}
See Note S\ref{Note:B}.
\end{proof}

Lemma~\ref{lem:avg-invariants} states that in our \textbf{\texttt{GraT-Diff}}, the centroids
$(\bar{\mathbf{z}}^k,\bar{\mathbf{g}}^k)$ evolve like a centralized update on $F(\mathbf{z})$ but with a perturbation. The only difference is that the global gradient at the average, $\nabla F(\bar{\mathbf{z}}^k)$, is replaced by the network-average gradient $\overline{\nabla f}^{k}$. The resulting bias is precisely $\Delta^k$, and \eqref{eq:delta-bound} shows that its norm is upper bounded by the scaled disagreement norm $\|\mathcal{C}^k\|$. 
Building on this, Lemma~\ref{lem:avg-descent} establishes that, under the centroid dynamics of Lemma~\ref{lem:avg-invariants}, the average optimality error $\|\mathbf{s}^k\|$ contracts by a factor $\alpha_1$ up to an additive term $\alpha_2\|\mathcal{C}^k\|$. Consequently, a reduction in $\|\mathcal{C}^k\|$ is essential to drive the decay of $\|\mathbf{s}^k\|$. We now analyze the dynamics of the disagreement $\mathcal{C}^k$.

\begin{lemma}[Network disagreement recursion]\label{lem:block-consensus}
For recursion over window period $\{(i-1)\tau,\cdots,k-1,k\}$ where $k\in \mathcal{I}_{i}=\{i\tau,\ldots,(i+1)\tau-1\}$, the network disagreement $\mathcal{C}^k$ satisfies
\begin{equation}\label{eq:block-e}
\|\mathcal{C}^k\|
\le \epsilon_\tau \|\mathcal{C}^{(i-1)\tau}\| + \mu\, \sum_{j=(i-1)\tau}^{k-1}\| \mathcal{T}^j \|.
\end{equation}
\end{lemma}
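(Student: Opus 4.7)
The plan is to derive a linear recursion for the disagreement $\mathcal{C}^k=\mathcal{J}_\perp\mathcal{Z}^k$, unroll it back to the start of the previous window, and then invoke the $\tau$-step block contraction from Assumption~\ref{as:stochastic} to pick up the factor $\epsilon_\tau$ cleanly. First I would apply $\mathcal{J}_\perp$ to the state update $\mathcal{Z}^{k+1}=\mathcal{W}^k(\mathcal{Z}^k-\mu\mathit{G}^k)$. Because every $\mathbf{W}^{(l)}$ is doubly stochastic, the Kronecker lift $\mathcal{W}^k$ commutes with $\mathcal{J}$ (both satisfy $\mathcal{W}^k\mathcal{J}=\mathcal{J}\mathcal{W}^k=\mathcal{J}$), hence also with $\mathcal{J}_\perp=\mathbf{I}_{Rd}-\mathcal{J}$. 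Together with $\mathcal{J}_\perp\mathcal{Z}^k=\mathcal{C}^k$ and $\mathcal{J}_\perp\mathit{G}^k=\mathcal{T}^k$, this yields the one-step recursion
\begin{equation*}
\mathcal{C}^{k+1}=\mathcal{W}^k\mathcal{C}^k-\mu\,\mathcal{W}^k\mathcal{T}^k,
\end{equation*}
where both $\mathcal{C}^k$ and $\mathcal{T}^k$ lie in $\mathrm{Range}(\mathcal{J}_\perp)$.

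Next I would unroll this from $k_0=(i-1)\tau$ up to $k\in\mathcal{I}_i$, producing
\begin{equation*}
\mathcal{C}^k=\Bigl(\prod_{j=k_0}^{k-1}\mathcal{W}^{j}\Bigr)\mathcal{C}^{k_0}-\mu\sum_{j=k_0}^{k-1}\Bigl(\prod_{p=j}^{k-1}\mathcal{W}^{p}\Bigr)\mathcal{T}^{j},
\end{equation*}
with the convention that products are ordered left-to-right by decreasing index. Since $k-k_0\ge\tau$, I would split the leading product at the boundary $i\tau$ into a leftover tail $\prod_{j=i\tau}^{k-1}\mathcal{W}^j$ and a full window block $\prod_{j=k_0}^{i\tau-1}\mathcal{W}^j$. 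By $\tau$-periodicity the latter equals $\bigl(\mathbf{W}^{(\tau-1)}\cdots\mathbf{W}^{(0)}\bigr)\otimes\mathbf{I}_d$; applied to the vector $\mathcal{C}^{k_0}\in\mathrm{Range}(\mathcal{J}_\perp)$, the constant-subspace part vanishes and Assumption~\ref{as:stochastic} gives a contraction by exactly $\epsilon_\tau$. The tail factor has spectral norm $\le 1$, because each doubly stochastic $\mathbf{W}^{(l)}$ is a Birkhoff average of permutation matrices and hence has operator norm $\le 1$, so this residual factor does not inflate the bound.

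The same $\|\mathcal{W}^j\|\le 1$ bound, applied submultiplicatively to each tail product $\prod_{p=j}^{k-1}\mathcal{W}^{p}$, collapses the second sum into $\mu\sum_{j=k_0}^{k-1}\|\mathcal{T}^j\|$. Combining the two pieces via the triangle inequality yields exactly
\begin{equation*}
\|\mathcal{C}^k\|\le\epsilon_\tau\,\|\mathcal{C}^{(i-1)\tau}\|+\mu\sum_{j=(i-1)\tau}^{k-1}\|\mathcal{T}^j\|,
\end{equation*}
which is the claim. The main obstacle I anticipate is the bookkeeping around the window split: one must verify that the prefix of the product being unrolled aligns with a complete $\tau$-periodic cycle starting at $(i-1)\tau$ so that the $\epsilon_\tau$ contraction truly applies, and that the residual tail indices $i\tau,\ldots,k-1$ are handled by the cheap $\|\cdot\|\le 1$ bound rather than by a second contraction (which would not in general be available since $k-i\tau<\tau$ need not correspond to a full window). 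Once this alignment is in place and the orthogonality $\mathcal{C}^{k_0}\perp\mathrm{Range}(\mathcal{J})$ is used to convert the product into the off-consensus operator of norm $\epsilon_\tau$, the rest is a direct triangle-inequality step.
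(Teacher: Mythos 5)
Your proposal is correct and follows essentially the same route as the paper: project with $\mathcal{J}_\perp$, unroll the one-step recursion over the window, peel off the full $\tau$-block product to get the $\epsilon_\tau$ contraction from Assumption~\ref{as:stochastic}, and bound the residual tail products and the $\mathcal{T}^j$ terms by operator norm $\le 1$. The only cosmetic difference is that the paper writes the iteration with the factors $(\mathcal{W}^j-\mathcal{J})$ explicitly, whereas you keep $\mathcal{W}^j$ and use invariance of $\mathrm{Range}(\mathcal{J}_\perp)$; these are equivalent since $\mathcal{J}$ annihilates $\mathcal{C}^j$ and $\mathcal{T}^j$.
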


\begin{proof}
See Note S\ref{Note:C}.
\end{proof}

\begin{lemma}[Tracking error recursion]\label{lem:track-rec}
For recursion over window period $\{(i-1)\tau,\cdots,k-1,k\}$, where $k\in \mathcal{I}_{i}=\{i\tau,\ldots,(i+1)\tau-1\}$, the tracking disagreement $\mathcal{T}^k $ satisfies
\begin{equation}\label{eq:track-one}
\begin{aligned}
 \Vert\mathcal{T}^{k}\Vert 
&\le
\epsilon_\tau \Vert\mathcal{T}^{(i-1)\tau}\Vert
+ \beta_{1}\sum_{j=(i-1)\tau}^{k-1}
\Vert\mathcal{C}^{\,j}\Vert \ + \beta_{2}\sum_{j=(i-1)\tau}^{k-1}\Vert\mathcal{T}^{\,j}\Vert
\\ & + \beta_{3}\sum_{j=(i-1)\tau}^{k-1}\Vert\mathbf{s}^{\,j}\Vert,
\end{aligned}
\end{equation}
where 
\begin{equation}\label{eq:track-two}
\begin{aligned}
\beta_{1} &=L\,c_W+\mu L^{2},\quad \beta_{2} =\mu\,L,\\
\beta_{3} &=\mu\,\sqrt{R}\,L^{2}, \quad c_W = \sup_j \big\|(\mathbf{W}^{(j \%\tau)}-\mathbf{I})\big\|_{\mathcal{J}_\perp}\leq 2.
\end{aligned}
\end{equation}
\end{lemma}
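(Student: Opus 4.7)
The plan is to project the tracker update onto the consensus complement $\mathcal{J}_\perp$, unroll the resulting recursion over the window $\{(i-1)\tau,\ldots,k\}$, and bound each gradient-increment term via $L$-smoothness together with the identities of Lemma~\ref{lem:avg-invariants}. Because every $\mathbf{W}^{(l)}$ is doubly stochastic, $\mathcal{J}\mathcal{W}^k = \mathcal{W}^k\mathcal{J} = \mathcal{J}$, which implies $\mathcal{J}_\perp\mathcal{W}^k = \mathcal{W}^k\mathcal{J}_\perp$. Applying $\mathcal{J}_\perp$ to the second line of~\eqref{eq:gt_vec} gives the one-step recursion
\begin{equation*}
\mathcal{T}^{k+1} = \mathcal{W}^k\mathcal{T}^k + \mathcal{J}_\perp\bigl(\nabla\mathcal{F}^{k+1}-\nabla\mathcal{F}^k\bigr).
\end{equation*}

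Unrolling from $j=(i-1)\tau$ to $j=k-1$ and writing $\Phi_{a,b}\triangleq \mathcal{W}^{a-1}\cdots\mathcal{W}^{b}$, the triangle inequality yields
\begin{equation*}
\|\mathcal{T}^k\| \le \bigl\|\Phi_{k,(i-1)\tau}\mathcal{T}^{(i-1)\tau}\bigr\| + \sum_{j=(i-1)\tau}^{k-1}\bigl\|\Phi_{k,j+1}\mathcal{J}_\perp(\nabla\mathcal{F}^{j+1}-\nabla\mathcal{F}^j)\bigr\|.
\end{equation*}
Since $k\ge i\tau$, the product $\Phi_{k,(i-1)\tau}$ contains the full $\tau$-window $\mathcal{W}^{i\tau-1}\cdots\mathcal{W}^{(i-1)\tau} = (\mathbf{W}^{(\tau-1)}\cdots\mathbf{W}^{(0)})\otimes\mathbf{I}_d$; on $\mathrm{range}(\mathcal{J}_\perp)$, Assumption~\ref{as:stochastic} contributes the contraction $\epsilon_\tau$, while the remaining (at most $\tau-1$) doubly-stochastic factors act non-expansively on that subspace. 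The same non-expansiveness applied to each $\Phi_{k,j+1}$ gives $\|\Phi_{k,j+1}\mathcal{J}_\perp(\cdot)\|\le\|\nabla\mathcal{F}^{j+1}-\nabla\mathcal{F}^j\|$, reducing the problem to bounding a single gradient increment.

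For each increment, Assumption~\ref{as:smooth-strong} gives $\|\nabla\mathcal{F}^{j+1}-\nabla\mathcal{F}^j\|\le L\|\mathcal{Z}^{j+1}-\mathcal{Z}^j\|$, while the state recursion in~\eqref{eq:gt_vec} yields $\mathcal{Z}^{j+1}-\mathcal{Z}^j = (\mathcal{W}^j-\mathbf{I})\mathcal{Z}^j - \mu\,\mathcal{W}^j\mathit{G}^j$. I decompose $\mathcal{Z}^j = \mathbf{1}\otimes\bar{\mathbf{z}}^j + \mathcal{C}^j$ and use $\mathcal{W}^j(\mathbf{1}\otimes\bar{\mathbf{z}}^j) = \mathbf{1}\otimes\bar{\mathbf{z}}^j$ to collapse the first term to $(\mathcal{W}^j-\mathbf{I})\mathcal{C}^j$, whose norm is at most $c_W\|\mathcal{C}^j\|$. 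Similarly, $\mathit{G}^j = \mathbf{1}\otimes\bar{\mathbf{g}}^j + \mathcal{T}^j$ gives $\|\mathcal{W}^j\mathit{G}^j\|\le \sqrt{R}\|\bar{\mathbf{g}}^j\| + \|\mathcal{T}^j\|$. By Lemma~\ref{lem:avg-invariants}, $\bar{\mathbf{g}}^j = \overline{\nabla f}^j$ for every $j\ge 0$ (the case $j=0$ uses the initialization $\mathbf{g}_r^0 = \nabla_{\mathbf{z}}f_r(\mathbf{x}_r^0,\mathbf{z}_r^0)$), so combining $\nabla F(\mathbf{z}^\star)=0$, the $L$-smoothness of $F$, and~\eqref{eq:delta-bound} produces
\begin{equation*}
\|\bar{\mathbf{g}}^j\| \le \|\Delta^j\| + \|\nabla F(\bar{\mathbf{z}}^j)-\nabla F(\mathbf{z}^\star)\| \le \tfrac{L}{\sqrt{R}}\|\mathcal{C}^j\| + L\|\mathbf{s}^j\|.
\end{equation*}
Collecting terms gives $\|\nabla\mathcal{F}^{j+1}-\nabla\mathcal{F}^j\|\le \beta_1\|\mathcal{C}^j\|+\beta_2\|\mathcal{T}^j\|+\beta_3\|\mathbf{s}^j\|$ with exactly the constants in~\eqref{eq:track-two}, and substituting into the unrolled inequality proves~\eqref{eq:track-one}.

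The main obstacle is the bookkeeping when factoring $\Phi_{k,(i-1)\tau}$: one has to isolate exactly one complete $\tau$-window to absorb the $\epsilon_\tau$ factor, while justifying non-expansion of the remaining, possibly non-symmetric, doubly-stochastic factors on $\mathrm{range}(\mathcal{J}_\perp)$---a fact that follows from Perron--Frobenius applied to the doubly-stochastic Gram matrix $\mathbf{W}^{(l)\top}\mathbf{W}^{(l)}$. A subtler point is invoking Lemma~\ref{lem:avg-invariants}'s identity $\bar{\mathbf{g}}^j=\overline{\nabla f}^j$ down to $j=0$; without the stated initialization of $\mathbf{g}_r^0$, an extra residual $\bar{\mathbf{g}}^0-\overline{\nabla f}^0$ would propagate through the summation and alter the constants in~\eqref{eq:track-two}.
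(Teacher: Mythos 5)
Your proof is correct and follows essentially the same route as the paper's Appendix D: project the tracker recursion onto $\mathcal{J}_\perp$, unroll over the window so one full $\tau$-block supplies the $\epsilon_\tau$ contraction while the remaining factors are non-expansive, and bound each gradient increment via $L$-smoothness with the decompositions $\mathcal{Z}^j=\mathbf{1}\otimes\bar{\mathbf{z}}^j+\mathcal{C}^j$ and $\mathit{G}^j=\mathbf{1}\otimes\bar{\mathbf{g}}^j+\mathcal{T}^j$ together with Lemma~\ref{lem:avg-invariants}. The only cosmetic difference is that you keep products of $\mathcal{W}^j$ restricted to $\mathrm{range}(\mathcal{J}_\perp)$ whereas the paper works with $(\mathcal{W}^j-\mathcal{J})$ and bounds $\|\mathcal{W}^j-\mathcal{J}\|\le 1$ directly; the two are equivalent.
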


\begin{proof}
See Note S\ref{Note:D}.
\end{proof}

Instead of analyzing single iterations, Lemmas~\ref{lem:block-consensus} and \ref{lem:track-rec} study a block iterate that spans one full period of the mixing combination matrices, as specified in Assumption~\ref{as:stochastic}, where per-iteration graph connectivity may fail. These lemmas show that, although Assumption~\ref{as:stochastic} guarantees a block-wise contraction between the beginning and end of each period, it also introduces summations across the intermediate steps within the block, complicating the analysis. Consequently, the standard small-increment argument is impractical, and the sections that follow develop an alternative analysis framework tailored to this setting.

\section*{Convergence Results}
We define $T_{i} = \max_{k\in\mathcal{I}_{i}}\Vert\mathcal{T}^{k}\Vert,\, C_{i} = \max_{k\in\mathcal{I}_{i}}\Vert\mathcal{C}^{k}\Vert,\, S_{i} = \max_{k\in\mathcal{I}_{i}}\Vert\mathbf{s}^{k}\Vert$. Lemma \ref{lem:track-rec} can therefore be expressed as
\begin{equation}\label{eq:block-T-rev}
\begin{aligned}
T_i  &\le\epsilon_\tau T_{i-1}+\tau \beta_{1}C_{i-1} +(\tau-1) \beta_{1}C_{i} + \tau \beta_{2} T_{i-1} \\& + (\tau-1) \beta_{2}T_{i} + \tau \beta_{3} S_{i-1} + (\tau-1) \beta_{3}S_{i},
\end{aligned}
\end{equation}
with $\max_{k\in\mathcal{I}_{i}}\Vert\mathcal{T}^{(i-1)\tau}\Vert \leq T_{i-1}$, and 
\begin{equation}\label{eq:block-T-rev2}
\begin{aligned}
 &\max_{k\in\mathcal{I}_{i}}\sum_{j=(i-1)\tau}^{k-1}\Vert\mathcal{C}^{j}\Vert \\&\leq \max_{k\in\mathcal{I}_{i}}\sum_{j=(i-1)\tau}^{i\tau-1}\Vert\mathcal{C}^{j}\Vert+\max_{k\in\mathcal{I}_{i}}\sum_{j=i\tau}^{k-1}\Vert\mathcal{C}^{j}\Vert \\
 & \leq \sum_{j=(i-1)\tau}^{i\tau-1}C_{i-1}+\max_{k\in\mathcal{I}_{i}}\sum_{j=i\tau}^{(i+1)\tau-2}\Vert\mathcal{C}^{j}\Vert \leq \tau C_{i-1} +(\tau-1)C_{i}
\end{aligned}
\end{equation}
(same for $\max_{k\in\mathcal{I}_{i}}\sum_{j=(i-1)\tau}^{k-1}\{\Vert\mathcal{T}^{j}\Vert,\, \Vert\mathbf{s}^{j}\Vert\}$). Lemma \ref{lem:block-consensus} is
\begin{equation}\label{eq:block-C}
C_i \le \epsilon_\tau C_{i-1} + \tau\mu T_{i-1}+(\tau-1)\mu T_{i} ,
\end{equation}
and Lemma \ref{lem:avg-descent} can be rewritten as
\begin{equation}\label{eq:block-S}
\begin{aligned}
    S_{i} & \le\alpha_{1}^{\tau} S_{i-1} + \tau \alpha_2 C_{i-1}+(\tau-1) \alpha_2 C_{i}\\
& \le (1-\frac{m\tau}{2}\mu) S_{i-1} + \tau \alpha_2 C_{i-1}+(\tau-1) \alpha_2 C_{i}
\end{aligned}
\end{equation}
for $\mu\le \frac{1}{m\tau}$.
Moreover, with $\mathbf{v}_i=[c_W \tau C_i,\,L^{-1}T_i,\,\sqrt{R}S_{i}]^\top$, we can equivalently express \eqref{eq:block-T-rev}, \eqref{eq:block-C} and \eqref{eq:block-S} as
    \begin{equation}\label{eq:block-all}
\begin{aligned}
   \mathbf{v}_i &\le\Bigg( \underbrace{ \begin{bmatrix}
\epsilon_{\tau} &
0&
0\\
1 & \epsilon_{\tau} & 0\\
0 &
0 &
1
\end{bmatrix}}_{\boldsymbol{\varphi_{0}}}+\underbrace{ \begin{bmatrix}
0&
c_W\tau&
0\\
\frac{1}{c_W\tau} & 1 & 1\\
\frac{1}{c_W\tau} &
0 &
\frac{-1}{2Q}
\end{bmatrix}}_{\boldsymbol{\varphi_{1}}}\tau L\mu\Bigg)\mathbf{v}_{i-1} \\ &+\Bigg( \underbrace{ \begin{bmatrix}
0 &
0&
0\\
1-\frac{1}{\tau} & 0 & 0\\
0 &
0 &
0
\end{bmatrix}}_{\boldsymbol{\psi_{0}}}+\underbrace{ \begin{bmatrix}
0&
c_W\tau&
0\\
\frac{1}{c_W\tau}  & 1 & 1\\
\frac{1}{c_W\tau}  &
0 &
0
\end{bmatrix}}_{\boldsymbol{\psi_{1}}}(\tau-1) L\mu\Bigg)\mathbf{v}_{i}.
\end{aligned}
\end{equation}
Let $\omega=L\mu<1$, \eqref{eq:block-all} can be further written as
\begin{equation}\label{eq:block-iir}
\begin{aligned}
\mathbf{v}_i\le\big(\mathbf{I}-\boldsymbol{\psi_{0}}-\boldsymbol{\psi_{1}}(\tau-1)\omega\big)^{-1}\big(\boldsymbol{\varphi_{0}}+\boldsymbol{\varphi_{1}}\tau\omega\big)\mathbf{v}_{i-1},
\end{aligned}
\end{equation}
provided that $\mathbf{I}-\boldsymbol{\psi_{0}}-\boldsymbol{\psi_{1}}(\tau-1)\omega$ is invertible. To streamline notations, we introduce
\begin{equation}\label{eq:block-continuous}
\begin{aligned}
& \boldsymbol{\mathcal{H}}(\omega) =\boldsymbol{\mathcal{R}}(\omega)\boldsymbol{\varphi}(\omega), \quad \boldsymbol{\mathcal{R}}(\omega)=\big(\mathbf{I}-\boldsymbol{\psi}(\omega)\big)^{-1},\\
& \boldsymbol{\psi}(\omega) = \boldsymbol{\psi_{0}}+\boldsymbol{\psi_{1}}(\tau-1)\omega, \quad  \boldsymbol{\varphi}(\omega) = \boldsymbol{\varphi_{0}}+\boldsymbol{\varphi_{1}}\tau\omega,
\end{aligned}
\end{equation}
and represent \eqref{eq:block-iir} by $\mathbf{v}_i\le\boldsymbol{\mathcal{H}}(\omega)\mathbf{v}_{i-1}$, thus indicating that convergence holds if the spectral radius $\rho(\boldsymbol{\mathcal{H}}(\omega))<1$. However, the complexity of $\boldsymbol{\mathcal{H}}(\omega)$ makes $\rho(\boldsymbol{\mathcal{H}}(\omega))$ less explicit, and often challenging to derive since it departs from classical, non-recursive small-gain framework. Consequently, the standard analytical tools do not apply in the present setting. Alternatively, we adopt matrix perturbation theory \cite{kato2013perturbation} for our analysis as follows.

\begin{theorem}[Simple eigenvalue perturbation, restated in our notation \cite{kato2013perturbation,stewart1998perturbation,greenbaum2020first}]\label{lem:MATRIX}
Let $\boldsymbol{\mathcal{H}}(\omega)$ be analytic around $\omega_0$, $\boldsymbol{\mathcal{H}}(\omega_0)$ has a simple eigenvalue $\lambda_0$ with left/right eigenvectors $\boldsymbol{l}^*_{0},\boldsymbol{r}_{0}$ normalized by $\boldsymbol{l}^*_{0}\boldsymbol{r}_{0}=1$, then
there exists analytic maps $\lambda(\omega)$, $\boldsymbol{l}^*(\omega)$, $\boldsymbol{r}(\omega)$ that establish $\boldsymbol{l}^*(\omega)\boldsymbol{\mathcal{H}}(\omega)=\lambda(\omega)\boldsymbol{l}^*(\omega)$, $\boldsymbol{\mathcal{H}}(\omega)\boldsymbol{r}(\omega)=\lambda(\omega)\boldsymbol{r}(\omega)$ and $\boldsymbol{l}^*(\omega)\boldsymbol{r}(\omega)=1$ with initial condition $\lambda(\omega_0)=\lambda_0$, $\boldsymbol{l}^*(\omega_0)=\boldsymbol{l}^*_{0}$, $\boldsymbol{r}(\omega_0)=\boldsymbol{r}_{0}$. And the derivatives hold
\begin{equation}\label{eq:Th1-d}
\begin{aligned}
    &\lambda'(\omega) =\boldsymbol{l}^*(\omega) \boldsymbol{\mathcal{H}}'(\omega)\boldsymbol{r}(\omega),\\
    & \lambda''(\omega) =\boldsymbol{l}^*(\omega) \boldsymbol{\mathcal{H}}''(\omega)\boldsymbol{r}(\omega)-2\boldsymbol{l}^*(\omega)\boldsymbol{\mathcal{H}}'(\omega)\boldsymbol{\mathcal{S}}(\omega)\boldsymbol{\mathcal{H}}'(\omega)\boldsymbol{r}(\omega),\\
    &\lambda'''(\omega) \\ & =\boldsymbol{l}^*(\omega) \boldsymbol{\mathcal{H}}'''(\omega)\boldsymbol{r}(\omega) \\ & -3\boldsymbol{l}^*(\omega)\Big(\boldsymbol{\mathcal{H}}''(\omega)\boldsymbol{\mathcal{S}}(\omega)\boldsymbol{\mathcal{H}}'(\omega) +\boldsymbol{\mathcal{H}}'(\omega)\boldsymbol{\mathcal{S}}(\omega)\boldsymbol{\mathcal{H}}''(\omega)\Big)\boldsymbol{r}(\omega) 
 \\ &+ 6\boldsymbol{l}^*(\omega)\boldsymbol{\mathcal{H}}'(\omega)\boldsymbol{\mathcal{S}}(\omega)\Big(\boldsymbol{\mathcal{H}}'(\omega)-\lambda'(\omega)\mathbf{I}\Big)\boldsymbol{\mathcal{S}}(\omega) \boldsymbol{\mathcal{H}}'(\omega)\boldsymbol{r}(\omega),
\end{aligned}
\end{equation}
where $\boldsymbol{\mathcal{S}}(\omega)$ satisfies 
\begin{equation}\label{eq:Th1-d1}
\begin{aligned}
    & (\boldsymbol{\mathcal{H}}(\omega)-\lambda(\omega)\mathbf{I})\boldsymbol{\mathcal{S}}(\omega)=\boldsymbol{\mathcal{S}}(\omega)(\boldsymbol{\mathcal{H}}(\omega)-\lambda(\omega)\mathbf{I})=\mathbf{I}-\boldsymbol{\mathcal{P}}(\omega),\\
    & \boldsymbol{\mathcal{P}}(\omega)= \boldsymbol{r}(\omega) \boldsymbol{l}^*(\omega), \quad \boldsymbol{\mathcal{S}}(\omega)\boldsymbol{r}(\omega)=0, \quad \boldsymbol{l}^*(\omega)\boldsymbol{\mathcal{S}}(\omega)=0.
\end{aligned}
\end{equation}
\end{theorem}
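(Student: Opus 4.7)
The plan is to combine the Riesz-projector construction from analytic perturbation theory with repeated implicit differentiation of the eigenvalue equation, proceeding in three stages: (i) construct the analytic maps $\lambda(\omega),\boldsymbol{l}^*(\omega),\boldsymbol{r}(\omega)$ and the reduced resolvent $\boldsymbol{\mathcal{S}}(\omega)$; (ii) derive $\lambda'(\omega)$ from the eigenvalue equation; and (iii) iterate to obtain $\lambda''(\omega)$ and $\lambda'''(\omega)$.

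For stage (i), since $\lambda_0$ is a \emph{simple} eigenvalue of $\boldsymbol{\mathcal{H}}(\omega_0)$, I would choose a positively oriented Jordan contour $\Gamma\subset\mathbb{C}$ that encloses $\lambda_0$ and excludes the rest of the spectrum. By continuity of the resolvent, the rank-one spectral projector $\boldsymbol{\mathcal{P}}(\omega)=\tfrac{1}{2\pi i}\oint_{\Gamma}(\zeta\mathbf{I}-\boldsymbol{\mathcal{H}}(\omega))^{-1}\,d\zeta$ is analytic for all $\omega$ sufficiently close to $\omega_0$. From $\boldsymbol{\mathcal{P}}(\omega)$ I extract analytic right and left eigenvectors $\boldsymbol{r}(\omega),\boldsymbol{l}^*(\omega)$ normalized by $\boldsymbol{l}^*(\omega)\boldsymbol{r}(\omega)=1$, set $\lambda(\omega)=\boldsymbol{l}^*(\omega)\boldsymbol{\mathcal{H}}(\omega)\boldsymbol{r}(\omega)$, and define the reduced resolvent $\boldsymbol{\mathcal{S}}(\omega)=\tfrac{1}{2\pi i}\oint_{\Gamma}(\zeta-\lambda(\omega))^{-1}(\zeta\mathbf{I}-\boldsymbol{\mathcal{H}}(\omega))^{-1}\,d\zeta$, which automatically satisfies the identities in \eqref{eq:Th1-d1}.

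For stages (ii)--(iii), I would differentiate $\boldsymbol{\mathcal{H}}(\omega)\boldsymbol{r}(\omega)=\lambda(\omega)\boldsymbol{r}(\omega)$ and left-multiply by $\boldsymbol{l}^*(\omega)$; using $\boldsymbol{l}^*\boldsymbol{\mathcal{H}}=\lambda\boldsymbol{l}^*$ and $\boldsymbol{l}^*\boldsymbol{r}=1$, the $\boldsymbol{r}'$ terms cancel and I obtain $\lambda'(\omega)=\boldsymbol{l}^*\boldsymbol{\mathcal{H}}'\boldsymbol{r}$. To access the higher derivatives I need $\boldsymbol{r}'(\omega)$: from $(\boldsymbol{\mathcal{H}}-\lambda\mathbf{I})\boldsymbol{r}'=-(\boldsymbol{\mathcal{H}}'-\lambda'\mathbf{I})\boldsymbol{r}$, whose right-hand side lies in $\mathrm{Range}(\mathbf{I}-\boldsymbol{\mathcal{P}})$, applying $\boldsymbol{\mathcal{S}}$ and invoking $\boldsymbol{\mathcal{S}}\boldsymbol{r}=0$ yields $\boldsymbol{r}'=-\boldsymbol{\mathcal{S}}\boldsymbol{\mathcal{H}}'\boldsymbol{r}$. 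Substituting into the twice-differentiated eigenvalue equation produces $\lambda''(\omega)$. One further Leibniz expansion, combined with the differentiated identity $\boldsymbol{\mathcal{S}}(\boldsymbol{\mathcal{H}}-\lambda\mathbf{I})=\mathbf{I}-\boldsymbol{\mathcal{P}}$ to express $\boldsymbol{r}''(\omega)$ and $\boldsymbol{\mathcal{S}}'(\omega)$ in terms of $\boldsymbol{\mathcal{S}},\boldsymbol{\mathcal{H}}',\lambda'$, yields $\lambda'''(\omega)$.

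The main obstacle will be the bookkeeping in the $\lambda'''$ derivation: several products containing $\boldsymbol{\mathcal{S}}'(\omega)$, $\boldsymbol{r}''(\omega)$, and cross terms must be collapsed using the annihilations $\boldsymbol{l}^*\boldsymbol{\mathcal{S}}=0$ and $\boldsymbol{\mathcal{S}}\boldsymbol{r}=0$, the normalization consequence $\boldsymbol{l}^*\boldsymbol{r}'=0$ obtained from differentiating $\boldsymbol{l}^*\boldsymbol{r}=1$, and the commutator relation in \eqref{eq:Th1-d1}, so as to arrive at the symmetric form involving $\boldsymbol{\mathcal{H}}'(\omega)-\lambda'(\omega)\mathbf{I}$. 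To keep the algebra manageable I would evaluate intermediate quantities at $\omega_0$ whenever possible and cross-check each derivative by expanding the Cauchy-integral representation of $\lambda(\omega)$ in powers of $\omega-\omega_0$, which by construction respects the required symmetries.
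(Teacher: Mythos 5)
The paper does not actually prove this statement: Theorem~1 is imported verbatim from the cited references (Kato; Stewart--Sun; Greenbaum et al.), and the surrounding text only \emph{uses} the derivative formulas, supplementing them with Definition~1 (the contour-integral representations of $\boldsymbol{\mathcal{P}}(\omega)$ and $\boldsymbol{\mathcal{S}}(\omega)$). So there is no in-paper argument to match against; what you have written is a reconstruction of the classical proof. Your outline is the standard and correct one: the Riesz projector over a contour isolating $\lambda_0$ gives analyticity of $\lambda(\omega)$, $\boldsymbol{r}(\omega)$, $\boldsymbol{l}^*(\omega)$ near $\omega_0$ (this is exactly the representation the paper records in Definition~1, up to the sign convention $(\boldsymbol{\mathcal{H}}-z\mathbf{I})^{-1}=-(z\mathbf{I}-\boldsymbol{\mathcal{H}})^{-1}$), and repeated implicit differentiation of $\boldsymbol{\mathcal{H}}\boldsymbol{r}=\lambda\boldsymbol{r}$ together with $\boldsymbol{r}'=-\boldsymbol{\mathcal{S}}\boldsymbol{\mathcal{H}}'\boldsymbol{r}$ and the annihilations $\boldsymbol{l}^*\boldsymbol{\mathcal{S}}=0$, $\boldsymbol{\mathcal{S}}\boldsymbol{r}=0$ produces \eqref{eq:Th1-d}.

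One small imprecision worth fixing: differentiating $\boldsymbol{l}^*\boldsymbol{r}=1$ gives $(\boldsymbol{l}^*)'\boldsymbol{r}+\boldsymbol{l}^*\boldsymbol{r}'=0$, not $\boldsymbol{l}^*\boldsymbol{r}'=0$; the latter is an additional gauge choice, and with it the identity $\boldsymbol{r}'=-\boldsymbol{\mathcal{S}}\boldsymbol{\mathcal{H}}'\boldsymbol{r}$ holds exactly (otherwise $\boldsymbol{r}'$ carries an extra component $\boldsymbol{\mathcal{P}}\boldsymbol{r}'=(\boldsymbol{l}^*\boldsymbol{r}')\,\boldsymbol{r}$ along the eigenvector). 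You should either impose that gauge explicitly or observe that the ambiguity is harmless because the coefficient multiplying $\boldsymbol{\mathcal{P}}\boldsymbol{r}'$ in the $\lambda''$ computation is $\boldsymbol{l}^*(\boldsymbol{\mathcal{H}}'-\lambda'\mathbf{I})\boldsymbol{r}=\lambda'-\lambda'=0$, and the analogous cancellations occur at third order; this is precisely why the final formulas \eqref{eq:Th1-d} are gauge-independent. With that caveat addressed, your derivation goes through and reproduces the stated $\lambda'$, $\lambda''$, and $\lambda'''$, including the $\boldsymbol{\mathcal{H}}'(\omega)-\lambda'(\omega)\mathbf{I}$ structure in the third-order term.
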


\begin{definition}[Riesz projection and reduced resolvent \cite{greenbaum2020first,riesz2012functional}]\label{lem:dddd}
Let \(\Gamma\) be a simple closed contour in the complex plane that encloses the eigenvalue \(\lambda(\omega)\) of \(\boldsymbol{\mathcal{H}}(\omega)\) and no other eigenvalues. Then the operators \(\boldsymbol{\mathcal{P}}(\omega)\) and \(\boldsymbol{\mathcal{S}}(\omega)\) from \eqref{eq:Th1-d1} in Theorem~\ref{lem:MATRIX} can be equivalently written as
\begin{equation}\label{eq:d2-d1}
\begin{aligned}
\boldsymbol{\mathcal{P}}(\omega) & = -\frac{1}{2\pi j}\oint_{\Gamma}(\mathcal{H}(\omega)-z\mathbf{I})^{-1}dz, \\
\boldsymbol{\mathcal{S}}(\omega) & = \frac{1}{2\pi j}\oint_{\Gamma}\frac{1}{\lambda(\omega)-z}(\mathcal{H}(\omega)-z\mathbf{I})^{-1}dz,
\end{aligned}
\end{equation}
where \(\boldsymbol{\mathcal{P}}(\omega)\) is the Riesz projection associated with \(\lambda(\omega)\) and \(\boldsymbol{\mathcal{S}}(\omega)\) is the corresponding reduced resolvent.
\end{definition}

Theorem~\ref{lem:MATRIX} restates the classical Stewart–Sun/Kato results, giving explicit derivatives for a simple eigenvalue $\lambda(\omega)$ of an analytic matrix function $\boldsymbol{\mathcal{H}}(\omega)$ regardless of its complexity. Definition~\ref{lem:dddd} complements this by expressing the key operators, the Riesz projection \(\boldsymbol{\mathcal{P}}(\omega)\) and reduced resolvent \(\boldsymbol{\mathcal{S}}(\omega)\), as contour integrals. These analytic representations are crucial, as they allow us to use contour integration techniques to bound the norms of these operators, which in turn enables the derivation of computable bounds on the motion of the spectral radius $\rho(\boldsymbol{\mathcal{H}}(\omega))$.

Furthermore, we derive constants bounding the relative norms of $\{\boldsymbol{\psi_{0}},\boldsymbol{\psi_{1}},\boldsymbol{\varphi_{0}},\boldsymbol{\varphi_{1}}\}$ in \eqref{eq:block-all}
\begin{equation}\label{eq:constants}
\begin{aligned}
c_1 & = 2-1/\tau \ge \Vert (\mathbf{I}-\boldsymbol{\psi_{0}})^{-1} \Vert_{\infty},
\\ c_2 &=4\tau(\tau-1) \ge \Vert (\mathbf{I}-\boldsymbol{\psi_{0}})^{-1}\Vert_{\infty} \Vert\boldsymbol{\psi_{1}}(\tau-1)\Vert_{\infty},\\
c_3 &= 1+\epsilon_\tau \ge \Vert \boldsymbol{\varphi_{0}} \Vert_{\infty}, \quad c_4 = 2\tau+1\ge\Vert \boldsymbol{\varphi_{1}} \Vert_{\infty};
\end{aligned}
\end{equation}
together with composite constants
\begin{equation}\label{eq:constants-2}
\begin{aligned}
 c_{\boldsymbol{\mathcal{P}}} & = 2+4\frac{1+\epsilon_{\tau}}{1-\epsilon_{\tau}},\quad c_{\boldsymbol{\mathcal{S}}} =\frac{4}{1-\epsilon_{\tau}}c_{\boldsymbol{\mathcal{P}}}, \\
 c_{\boldsymbol{\mathcal{H}}'}& =2c_1(c_2c_3+\tau c_4) \Big(\frac{2}{1-\epsilon_{\tau}}+\frac{4(1+\epsilon_\tau)}{(1-\epsilon_{\tau})^2}\Big), \\
K_3 &= c_{\boldsymbol{\mathcal{P}}}\Big(19c_1 c^2_2(c_2c_3+\tau c_4)+60c_{\boldsymbol{\mathcal{S}}}c^2_1 c_2(c_2c_3+\tau c_4)^2\\
&+48c^2_{\boldsymbol{\mathcal{S}}} (1+c_{\boldsymbol{\mathcal{P}}})c^3_1(c_2c_3+\tau c_4)^3\Big),
\end{aligned}
\end{equation}
whose precise roles will be given in the subsequent section, as preparation for the convergence result that follows.

\begin{theorem}[The linear convergence of: \textbf{\texttt{GraT-Diff}}]\label{thm:global-sg}
Let Assumptions \ref{as:smooth-strong} and \ref{as:stochastic} hold, together with constant sets \eqref{eq:constants} \eqref{eq:constants-2}. If the stepsize $\mu$ in \textnormal{\textbf{\texttt{GraT-Diff}}} satisfies
\begin{equation}\label{eq:mu-sg-explicit}
\begin{aligned}
    \mu\ \le \min \bigg\{ & \frac{1}{\tau m},\ \frac{1}{4c_2L},\ \frac{1}{2c_{\boldsymbol{\mathcal{H}}'}L},\
\frac{1}{L}\sqrt\frac{3\tau}{2QK_3}
\bigg\}.
\end{aligned}
\end{equation}
then there exists $\varrho>0$ such that
\begin{equation}\label{eq:block-linear}
C_i + T_i + S_i \ \le\ \varrho\,\Big(1-\frac{m\tau}{4}\mu\Big)^{\,i}\,(C_0+T_0+S_0) \quad \text{for all} \,\,i\ge 0.
\end{equation}
Consequently, $\|\mathcal{C}^k\|\to 0$, $\|\mathcal{T}^k\|\to 0$, and $\|\mathbf{s}^k\|\to 0$, the global state $\mathcal{Z}^{k}$ converges to the exact minimizer $\mathbf{1}\!\otimes\!\mathbf{z}^\star$ linearly.
\end{theorem}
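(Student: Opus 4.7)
My plan is to recast~\eqref{eq:block-iir} as a spectral problem for the $3{\times}3$ iteration matrix $\boldsymbol{\mathcal{H}}(\omega)$ with $\omega=L\mu$. Under the step-size constraint $\mu\le 1/(4c_2L)$, the operator $\boldsymbol{\mathcal{R}}(\omega)=(\mathbf{I}-\boldsymbol{\psi}(\omega))^{-1}$ is well-defined by a convergent Neumann series and is entrywise nonnegative, so $\boldsymbol{\mathcal{H}}(\omega)$ inherits nonnegativity from $\boldsymbol{\varphi}(\omega)$. Showing $\rho(\boldsymbol{\mathcal{H}}(\omega))\le 1-m\tau\mu/4$ for the Perron root then yields~\eqref{eq:block-linear} by iterating $\mathbf{v}_i\le\boldsymbol{\mathcal{H}}(\omega)\mathbf{v}_{i-1}$ in an eigenbasis-weighted norm and undoing the diagonal scaling $(c_W\tau,L^{-1},\sqrt{R})$ that defined $\mathbf{v}_i$, with a prefactor $\varrho$ absorbing the resulting change of basis.

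\textbf{Base point and first-order expansion.} At $\omega=0$ one has $\boldsymbol{\mathcal{H}}(0)=(\mathbf{I}-\boldsymbol{\psi_0})^{-1}\boldsymbol{\varphi_0}$; a direct inspection shows that this matrix is block triangular with spectrum $\{\epsilon_\tau,\epsilon_\tau,1\}$. The eigenvalue $1$ is \emph{simple}, with right and left eigenvectors $\boldsymbol{r}_0=\boldsymbol{l}_0=(0,0,1)^{\!\top}$ (the coordinate tracking $\sqrt{R}S_i$), separated from the remaining cluster at $\epsilon_\tau$ by the gap $1-\epsilon_\tau$. Applying Theorem~\ref{lem:MATRIX} to the analytic branch $\lambda(\omega)$ emerging from $\lambda(0)=1$ yields
\[
\lambda'(0)=\boldsymbol{l}_0^{\!*}\boldsymbol{\mathcal{H}}'(0)\boldsymbol{r}_0=[\boldsymbol{\mathcal{H}}'(0)]_{3,3}=-\tfrac{\tau}{2Q},
\]
which is read off $\tau\boldsymbol{\varphi_1}$ alone, since the row-$3$ contribution from $\boldsymbol{\psi_1}(\tau-1)\boldsymbol{\mathcal{H}}(0)$ has a zero in column~$3$. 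Hence $\lambda(\omega)=1-\tfrac{m\tau}{2}\mu+\mathcal{O}(\mu^2)$, supplying the leading decay slope from which the final target $1-m\tau\mu/4$ will be secured by absorbing quadratic and cubic remainders.

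\textbf{Taylor remainder, step-size budget, and main obstacle.} To promote this local expansion to a uniform bound over the admissible range of $\omega$, I would use Taylor's theorem with integral remainder through third order and bound $\lambda''(\omega),\lambda'''(\omega)$ via the derivative formulas in Theorem~\ref{lem:MATRIX}, combined with the contour-integral representations of $\boldsymbol{\mathcal{S}}(\omega)$ and $\boldsymbol{\mathcal{P}}(\omega)$ in Definition~\ref{lem:dddd}. Choosing a contour of radius proportional to the spectral gap $1-\epsilon_\tau$ around $\lambda(\omega)$ yields the uniform norm estimates $\|\boldsymbol{\mathcal{P}}(\omega)\|\le c_{\boldsymbol{\mathcal{P}}}$, $\|\boldsymbol{\mathcal{S}}(\omega)\|\le c_{\boldsymbol{\mathcal{S}}}$ and $\|\boldsymbol{\mathcal{H}}'(\omega)\|\le c_{\boldsymbol{\mathcal{H}}'}$, whence $|\lambda''|\le c_{\boldsymbol{\mathcal{H}}'}$ and $|\lambda'''|\le K_3$. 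The four step-size conditions in~\eqref{eq:mu-sg-explicit} then play complementary roles: $\mu\le 1/(\tau m)$ validates $\alpha_1^\tau\le 1-m\tau\mu/2$ already used in~\eqref{eq:block-S}; $\mu\le 1/(4c_2L)$ keeps $\boldsymbol{\mathcal{R}}(\omega)$ in the Neumann regime and the resolvent contour safely away from the perturbed spectrum; $\mu\le 1/(2c_{\boldsymbol{\mathcal{H}}'}L)$ absorbs the quadratic remainder into $m\tau\mu/8$; and $\mu\le L^{-1}\sqrt{3\tau/(2QK_3)}$ absorbs the cubic remainder into a further $m\tau\mu/8$. Together these imply $\lambda(\omega)\le 1-m\tau\mu/4$, while the other two eigenvalues drift from $\epsilon_\tau$ only by $\mathcal{O}(\omega)$ and remain well below this threshold. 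The main obstacle I foresee is not the first-order slope---that calculation is clean---but securing the perturbation bounds \emph{uniformly} in $\omega$ across the whole admissible range rather than only near $\omega=0$. This is exactly what the composite constants $c_{\boldsymbol{\mathcal{P}}}$, $c_{\boldsymbol{\mathcal{S}}}$, $c_{\boldsymbol{\mathcal{H}}'}$, $K_3$ in~\eqref{eq:constants-2} are engineered to encode, and why the step-size budget is assembled dimensionally from these constants together with $Q$ and $\tau$.
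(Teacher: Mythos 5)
Your architecture matches the paper's: both reduce \eqref{eq:block-iir} to tracking the perturbed simple eigenvalue of $\boldsymbol{\mathcal{H}}(\omega)$ at $\omega=L\mu$, both identify the base spectrum $\{\epsilon_\tau,\epsilon_\tau,1\}$ with eigenvector $\mathbf{e}_3$, compute $\lambda'(0)=-\tau/(2Q)$, and invoke Theorem~\ref{lem:MATRIX} and the contour representations of Definition~\ref{lem:dddd} to bound the Taylor remainder. You even supply the final conversion from a spectral-radius bound to \eqref{eq:block-linear} (weighted norm plus undoing the diagonal scaling of $\mathbf{v}_i$), which the paper leaves implicit. However, there are two concrete gaps relative to the theorem as stated.

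First, you miss the computation $\lambda''(0)=0$ (see \eqref{eq:Ti-eliminate}), and this is not cosmetic. Your plan bounds $\lambda''(\omega)$ uniformly and absorbs a \emph{quadratic} remainder into $m\tau\mu/8$; that produces a step-size restriction of the form $\omega\lesssim \tau/(Q\,C_2)$, i.e.\ linear in the remainder constant, which cannot reproduce the stated square-root threshold $\mu\le \tfrac{1}{L}\sqrt{3\tau/(2QK_3)}$. The stated constant only emerges because the second-order term vanishes identically at $\omega=0$, so the expansion \eqref{eq:Ti-333} reads $\lambda(\omega)=1-\tfrac{\tau}{2Q}\omega+\tfrac{\lambda'''(\xi)}{6}\omega^3$ and only the cubic Lagrange remainder must be absorbed. (Your claimed bound $|\lambda''|\le c_{\boldsymbol{\mathcal{H}}'}$ is also not what that constant controls: $c_{\boldsymbol{\mathcal{H}}'}$ bounds $\|\boldsymbol{\mathcal{H}}'\|_\infty$, while $\lambda''$ involves $\boldsymbol{\mathcal{H}}''$ and $\boldsymbol{\mathcal{S}}$.) Second, you misassign the role of $\mu\le 1/(2c_{\boldsymbol{\mathcal{H}}'}L)$: in the paper this is the Neumann smallness condition \eqref{eq:Ti-ssssneww2}, $\sup_{z\in\Gamma}\|(\boldsymbol{\mathcal{H}}(0)-z\mathbf{I})^{-1}\|_\infty\,\|\Delta(\xi)\|_\infty\le \tfrac12$ with $\Delta(\xi)=\boldsymbol{\mathcal{H}}(\xi)-\boldsymbol{\mathcal{H}}(0)$, which is what guarantees that the contour $\Gamma:|z-1|=(1-\epsilon_\tau)/2$ isolates exactly one eigenvalue for all admissible $\omega$, hence that $|\lambda(\omega)|=\rho(\boldsymbol{\mathcal{H}}(\omega))$ and that the uniform bounds $c_{\boldsymbol{\mathcal{P}}},c_{\boldsymbol{\mathcal{S}}}$ in \eqref{eq:constants-2} are valid. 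The condition $\mu\le 1/(4c_2L)$ alone only keeps $\boldsymbol{\mathcal{R}}(\omega)$ in the Neumann regime; it does not control $\Delta(\xi)$ against the resolvent of $\boldsymbol{\mathcal{H}}(0)$. Since you flagged the uniformity of the perturbation bounds as the main obstacle, note that this third condition is precisely the ingredient that resolves it, and your budget leaves that role uncovered.
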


\begin{figure}[t]
  \centering
  \includegraphics[width=0.8\linewidth]{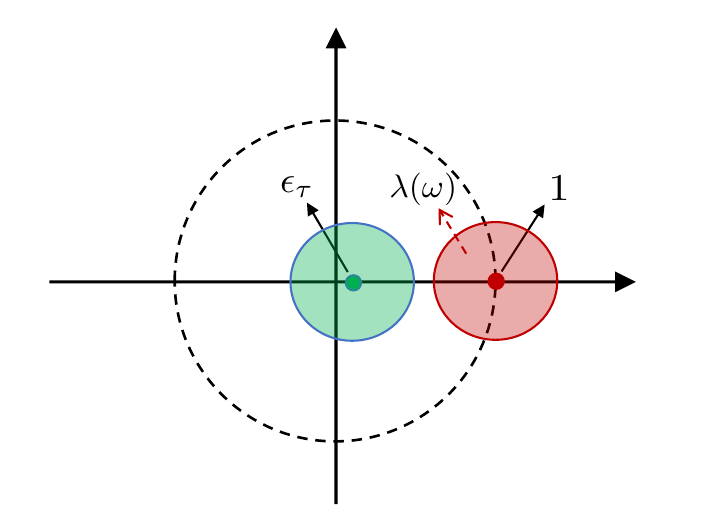}
    \caption{Spectrum of \(\boldsymbol{\mathcal{H}}(\omega)\). 
The green disk bounds the non-dominant modes associated with the eigenvalue \(\epsilon_\tau\), whereas the red disk isolates the dominant eigenvalue branch \(\lambda(\omega)\) perturbing from the eigenvalue \(1\), which determines the spectral radius. Theorem~\ref{lem:MATRIX} helps with the analysis, which reduces to controlling the Taylor expansion $\lambda(\omega) = 1-\frac{\tau}{2Q}\omega + \frac{\lambda'''(\xi)}{6}\omega^{3}, \quad \text{for} \quad 0<\xi \le \omega$. The key step is to absorb the remainder term \(\frac{\lambda'''(\xi)}{6}\omega^{3}\) so that the leading linear decay yields linear convergence.}
  \label{fig:spectrum}
\end{figure}

\begin{proof}
The main idea of the proof is illustrated in Fig.~\ref{fig:spectrum}. 
We expand $\boldsymbol{\mathcal{H}}(\omega)$ around $\omega_0=0$ to establish convergence for sufficiently small stepsizes. By having $\omega_0=0$ in \eqref{eq:block-all} \eqref{eq:block-iir}, we derive
\begin{equation}\label{eq:Ti-closed}
\begin{aligned}
   \boldsymbol{\mathcal{H}}(0)=  \begin{bmatrix}
\epsilon_{\tau} &
0&
0\\
1+(1-\frac{1}{\tau})\epsilon_{\tau} & \epsilon_{\tau} & 0\\
0 &
0 &
1
\end{bmatrix}.
\end{aligned}
\end{equation}
As $\boldsymbol{\mathcal{H}}(0)$ is a lower triangular matrix, we find $\{\epsilon_\tau,\epsilon_\tau,1\}$ form its eigenvalues, and $1$ is the simple eigenvalue that also dominates the spectral radius ($\epsilon_\tau <1 $). In addition, $1$'s corresponding eigenvector is as simple as $\mathbf{e}_3=[0,0,1]^\top$. By Theorem~\ref{lem:MATRIX}, perturbations $\lambda(\omega), \boldsymbol{l}^*(\omega),\boldsymbol{r}(\omega)$ with initialization $\lambda_0=1, \,\boldsymbol{l}_{0}=\boldsymbol{r}_{0}=\mathbf{e}_3$ can be analyzed as the characterization of spectral radius motions of $\boldsymbol{\mathcal{H}}(\omega)$. Specifically, we derive the point eigenvalue derivative
\begin{equation}\label{eq:Ti-eliminate}
\begin{aligned}
\lambda'(0)&=\boldsymbol{l}^*_0 \boldsymbol{\mathcal{H}}'(0)\boldsymbol{r}_0=\boldsymbol{e}^\top_{3}\boldsymbol{\mathcal{H}}'(0)\boldsymbol{e}_{3}=\frac{-\tau}{2Q},\\
\lambda''(0)&=\boldsymbol{l}^*_0 \boldsymbol{\mathcal{H}}''(0)\boldsymbol{r}_0-2\boldsymbol{l}^*_0\boldsymbol{\mathcal{H}}'(0)\boldsymbol{\mathcal{S}}(0)\boldsymbol{\mathcal{H}}'(0)\boldsymbol{r}_0=0,
\end{aligned}
\end{equation}
based on differentials of $\boldsymbol{\mathcal{H}}(\omega)$
\begin{equation}\label{eq:Ti-eliminate22}
\begin{aligned}
   \boldsymbol{\mathcal{H}}'(\omega) & =\boldsymbol{\mathcal{R}}(\omega)\big(\boldsymbol{\psi}'(\omega)\boldsymbol{\mathcal{R}}(\omega)\boldsymbol{\varphi}(\omega)+\boldsymbol{\varphi}'(\omega)\big),
   \\  \boldsymbol{\mathcal{H}}''(\omega)&=2\boldsymbol{\mathcal{R}}(\omega)\boldsymbol{\psi}'(\omega)\boldsymbol{\mathcal{R}}(\omega)\big(\boldsymbol{\psi}'(\omega)\boldsymbol{\mathcal{R}}(\omega)\boldsymbol{\varphi}(\omega) + \boldsymbol{\varphi}'(\omega)\big), \\ \boldsymbol{\mathcal{H}}'''(\omega)&=6\big(\boldsymbol{\mathcal{R}}(\omega)\boldsymbol{\psi}'(\omega)\big)^2\boldsymbol{\mathcal{R}}(\omega)\big(\boldsymbol{\psi}'(\omega)\boldsymbol{\mathcal{R}}(\omega)\boldsymbol{\varphi}(\omega) + \boldsymbol{\varphi}'(\omega)\big),
\end{aligned}
\end{equation}
at $\omega_0=0$ and the reduced resolvent \eqref{eq:Th1-d1}
\begin{equation}\label{eq:Ti-3334}
\begin{aligned}
   \boldsymbol{\mathcal{S}}(0)=  \begin{bmatrix}
\frac{1}{\epsilon_{\tau}-1} &
0&
0\\
\frac{-1-(1-\frac{1}{\tau})\epsilon_{\tau}}{(\epsilon_{\tau}-1)^2} & \frac{1}{\epsilon_{\tau}-1} & 0\\
0 &
0 &
0
\end{bmatrix}.
\end{aligned}
\end{equation}
Then, by Taylor expansion at $\omega_0=0$, it holds
\begin{equation}\label{eq:Ti-333}
\begin{aligned}
\lambda(\omega)& =\lambda(0)+ \lambda'(0)\omega+ \frac{\lambda''(0)}{2}\omega^2 + \frac{\lambda'''(\xi)}{6}\omega^{3}\\
 & =1-\frac{\tau}{2Q}\omega + \frac{\lambda'''(\xi)}{6}\omega^{3}, \quad \text{for} \quad 0<\xi \le \omega.
\end{aligned}
\end{equation}
As we can see, \eqref{eq:Ti-333} tells an unachievable maximum rate $1-\frac{\tau}{2Q}\omega$ with a remainder $\frac{\lambda'''(\xi)}{6}\omega^{3}$ if $\lambda(\omega)$ dominates the radius by $\vert\lambda(\omega)\vert=\rho(\boldsymbol{\mathcal{H}}(\omega))$. To bound $\lambda(\omega)$'s motion, we need the remainder to be absorbed, and it holds
\begin{equation}\label{eq:Timu}
\begin{aligned}
&|\lambda'''(\xi)| \\
&\le \bigg| \boldsymbol{l}^*(\xi) \boldsymbol{\mathcal{H}}'''(\xi)\boldsymbol{r}(\xi) \\ & -3\boldsymbol{l}^*(\xi)\Big(\boldsymbol{\mathcal{H}}''(\xi)\boldsymbol{\mathcal{S}}(\xi)\boldsymbol{\mathcal{H}}'(\xi) +\boldsymbol{\mathcal{H}}'(\xi)\boldsymbol{\mathcal{S}}(\xi)\boldsymbol{\mathcal{H}}''(\xi)\Big)\boldsymbol{r}(\xi)\\
 & + 6\boldsymbol{l}^*(\xi)\boldsymbol{\mathcal{H}}'(\xi)\boldsymbol{\mathcal{S}}(\xi)\Big(\boldsymbol{\mathcal{H}}'(\xi)-\lambda'(\xi)\mathbf{I}\Big)\boldsymbol{\mathcal{S}}(\xi) \boldsymbol{\mathcal{H}}'(\xi)\boldsymbol{r}(\xi) \bigg| \\
 &\le \Vert\boldsymbol{\mathcal{P}}(\xi)\Vert_{\infty}\Big(\Vert\boldsymbol{\mathcal{H}}'''(\xi)\Vert_{\infty}
 \\ &+6\Vert \boldsymbol{\mathcal{H}}''(\xi)\Vert_{\infty} \Vert \boldsymbol{\mathcal{S}}(\xi)\Vert_{\infty}\Vert\boldsymbol{\mathcal{H}}'(\xi)\Vert_{\infty} \\
& +6(1+\Vert\boldsymbol{\mathcal{P}}(\xi)\Vert_{\infty})\Vert \boldsymbol{\mathcal{S}}(\xi)\Vert_{\infty}^2 \Vert\boldsymbol{\mathcal{H}}'(\xi)\Vert_{\infty}^3 \Big),
\end{aligned}
\end{equation}
where the second inequality is based on $|\boldsymbol{l}^*(\xi)(\cdot)\boldsymbol{r}(\xi)|\le\Vert\boldsymbol{\mathcal{P}}(\xi)\Vert_{\infty} \Vert(\cdot) \Vert_{\infty}$ and $\lambda'(\xi) =\boldsymbol{l}^*(\xi) \boldsymbol{\mathcal{H}}'(\xi)\boldsymbol{r}(\xi)$. As we can see, several quantities ($\Vert \boldsymbol{\mathcal{H}}'(\xi)\Vert_{\infty},\Vert \boldsymbol{\mathcal{H}}''(\xi)\Vert_{\infty},\Vert \boldsymbol{\mathcal{H}}'''(\xi)\Vert_{\infty}$, $\Vert\boldsymbol{\mathcal{P}}(\xi)\Vert_{\infty}$ and $\Vert \boldsymbol{\mathcal{S}}(\xi)\Vert_{\infty}$) are needed for \eqref{eq:Timu}. From the relations in \eqref{eq:block-continuous} and \eqref{eq:Ti-eliminate22}, one can derive $\Vert \boldsymbol{\mathcal{R}}(\xi)\Vert_{\infty}$ as preparation for $\Vert \boldsymbol{\mathcal{H}}'(\xi)\Vert_{\infty},\Vert \boldsymbol{\mathcal{H}}''(\xi)\Vert_{\infty},\Vert \boldsymbol{\mathcal{H}}'''(\xi)\Vert_{\infty}$. 
Let $\xi \le \omega \le \frac{1}{4c_2}$ (\textit{the second term in \eqref{eq:mu-sg-explicit} appears}), $\mathbf{I}-\boldsymbol{\psi}(\omega)$ is invertible and $\boldsymbol{\mathcal{R}}(\xi)$ can be bounded
\begin{equation}\label{eq:TiM_renew}
\begin{aligned}
&\Vert \boldsymbol{\mathcal{R}}(\xi)\Vert_{\infty} \\ &\le  \underbrace{\Vert (\mathbf{I}-\boldsymbol{\psi_{0}})^{-1} \Vert_{\infty}}_{\le c_1} \Vert(\mathbf{I}-(\mathbf{I}-\boldsymbol{\psi_{0}})^{-1} \boldsymbol{\psi_{1}}(\tau-1)\xi)^{-1}\Vert_{\infty}  \\
&\le c_1 \frac{1}{1-\underbrace{\Vert(\mathbf{I}-\boldsymbol{\psi_{0}})^{-1} \boldsymbol{\psi_{1}}(\tau-1)\Vert_{\infty}}_{ \le c_2}\xi}\le c_1\frac{1}{1-c_2 \xi} \le \frac{4}{3}c_1
\end{aligned}
\end{equation}
where the second inequality is based on the Neumann series sum over $\xi$, and consolidating constants via \eqref{eq:constants}. Under \eqref{eq:TiM_renew}, it further establishes
\begin{equation}\label{eq:Ti-HHHH}
\begin{aligned}
  & \Vert \boldsymbol{\mathcal{H}}'(\xi)\Vert_{\infty} \le \Vert \boldsymbol{\mathcal{R}}(\xi)\Vert_{\infty}\big(\underbrace{\Vert\big(\boldsymbol{\psi}'(\xi)\boldsymbol{\mathcal{R}}(\xi)\Vert_{\infty}}_{\le c_2/(1-c_2 \xi)}\Vert\boldsymbol{\varphi}(\xi)\Vert_{\infty} +\Vert\boldsymbol{\varphi}'(\xi)\Vert_{\infty}\big) \\ &\le \frac{c_1}{(1-c_2 \xi)^2}(c_2c_3+\tau c_4) \le2c_1(c_2c_3+\tau c_4), \\
 & \Vert \boldsymbol{\mathcal{H}}''(\xi)\Vert_{\infty} \le 2 \underbrace{\Vert\boldsymbol{\mathcal{R}}(\xi)\boldsymbol{\psi}'(\xi)  \Vert_{\infty}}_{\le c_2/(1-c_2 \xi)} \Vert \boldsymbol{\mathcal{H}}'(\xi)\Vert_{\infty}  \\
 &\le  \frac{2c_1c_2}{(1-c_2 \xi)^3}(c_2c_3+\tau c_4) \le 5c_1 c_2(c_2 c_3+\tau c_4), \\
  &  \Vert \boldsymbol{\mathcal{H}}'''(\xi)\Vert_{\infty} \le  6 \underbrace{\Vert\boldsymbol{\mathcal{R}}(\xi)\boldsymbol{\psi}'(\xi)  \Vert^2_{\infty}}_{\le c^2_2/(1-c_2 \xi)^2} \Vert \boldsymbol{\mathcal{H}}'(\xi)\Vert_{\infty}\\
  & \le \frac{6c_1 c^2_2}{(1-c_2 \xi)^4}(c_2 c_3+\tau c_4)\le19c_1 c^2_2(c_2 c_3+\tau c_4),
\end{aligned}
\end{equation}
by expanding \eqref{eq:Ti-eliminate22}, invoking the norm’s subadditivity and submultiplicativity, and consolidating constants similar to \eqref{eq:TiM_renew}. Subsequently, it remains to derive $\Vert\boldsymbol{\mathcal{P}}(\xi)\Vert_{\infty}$ and $\Vert \boldsymbol{\mathcal{S}}(\xi)\Vert_{\infty}$ for the final bound on \eqref{eq:Timu}. Let $g=1-\epsilon_\tau $ be spectral gap of eigenvalue set $\{\epsilon_\tau,\epsilon_\tau,1\}$ from $\boldsymbol{\mathcal{H}}(0)$ and set a contour $\Gamma: |z-1|=g/2$, the resolvent
\begin{equation}\label{eq:Ti-ssssH}
\begin{aligned}
  &\sup_{z\in\Gamma} \Vert(\boldsymbol{\mathcal{H}}(0)-z\mathbf{I})^{-1}\Vert_{\infty} \\
  &= \sup_{z\in\Gamma} \Bigg\Vert \begin{bmatrix}
\frac{1}{z-\epsilon_{\tau}} &
0&
0\\
\frac{-1-(1-\frac{1}{\tau})\epsilon_{\tau}}{(z-\epsilon_{\tau})^2} & \frac{1}{z-\epsilon_{\tau}} & 0\\
0 &
0 &
\frac{1}{z-1}
\end{bmatrix}\Bigg\Vert_{\infty}\le \frac{2}{g}+\frac{4(1+\epsilon_\tau)}{g^2}.
\end{aligned}
\end{equation}
By introducing $\Delta(\xi)=\boldsymbol{\mathcal{H}}(\xi)-\boldsymbol{\mathcal{H}}(0)$, it holds
\begin{equation}\label{eq:Ti-ssssnew}
\begin{aligned}
  &\sup_{z\in\Gamma}  \Vert(\boldsymbol{\mathcal{H}}(\xi)-z\mathbf{I})^{-1}\Vert_{\infty} \\
  & =  \sup_{z\in\Gamma}  \Big\Vert \big(\mathbf{I}+(\boldsymbol{\mathcal{H}}(0)-z\mathbf{I})^{-1}\Delta(\xi)\big)^{-1} (\boldsymbol{\mathcal{H}}(0)-z\mathbf{I})^{-1}\Big\Vert_{\infty}.
\end{aligned}
\end{equation}
Let $\xi \le \omega\le \frac{1}{2c_{\boldsymbol{\mathcal{H}}'}}$ (\textit{the third term in \eqref{eq:mu-sg-explicit} appears}), Neumann smallness condition holds:
\begin{equation}\label{eq:Ti-ssssneww2}
\begin{aligned}
 &\sup_{z\in\Gamma}  \big\Vert (\boldsymbol{\mathcal{H}}(0)-z\mathbf{I})^{-1}\big\Vert_{\infty}\big\Vert \Delta(\xi) \big\Vert_{\infty} \\
 &\le\Big(\frac{2}{g}+\frac{4(1+\epsilon_\tau)}{g^2} \Big) \Vert \boldsymbol{\mathcal{H}}'(\xi)\Vert_{\infty} \frac{1}{2c_{\boldsymbol{\mathcal{H}}'}} \le \frac{1}{2},
\end{aligned}
\end{equation}
where the first inequality is based on the mean value theorem on $\Delta(\xi)$ and \eqref{eq:constants-2} \eqref{eq:Ti-HHHH} \eqref{eq:Ti-ssssH}. Here, $c_{\boldsymbol{\mathcal{H}}'}$ appears as the separation guarantee for the spectrum. With the Neumann series
sum condition, \eqref{eq:Ti-ssssnew} can thus be rewritten as
\begin{equation}\label{eq:Ti-ssssnew2}
\begin{aligned}
 \sup_{z\in\Gamma}  \Vert(\boldsymbol{\mathcal{H}}(\xi)-z\mathbf{I})^{-1}\Vert_{\infty} \le2\Big(\frac{2}{g}+\frac{4(1+\epsilon_\tau)}{g^2} \Big).
\end{aligned}
\end{equation}
By \eqref{eq:Ti-ssssnew2}, the resolvent is well-defined on $\Gamma$, so exactly one eigenvalue \(\lambda(\omega)\) of \(\boldsymbol{\mathcal H}(\omega)\) lies in the disk of $\Gamma$.
One can check \(\Gamma_\varepsilon :|z-\epsilon_\tau|=g/2\) traps the other two eigenvalues within the $\Gamma_\varepsilon$ under the same condition. Hence \(|\lambda(\omega)|=\rho(\boldsymbol{\mathcal H}(\omega))\), which justifies the initial statement. By Definition~\ref{lem:dddd}, we have
\begin{equation}\label{eq:Ti-cps}
\begin{aligned}
&\Vert\boldsymbol{\mathcal{P}}(\xi)\Vert_{\infty} \\ & \le \frac{1}{2\pi}\int_{\Gamma}\Vert(\mathcal{H}(\xi)-z\mathbf{I})^{-1}\Vert_{\infty} |dz| \\ & \le \frac{g}{2} \cdot \sup_{z\in\Gamma}  \Vert(\boldsymbol{\mathcal{H}}(\xi)-z\mathbf{I})^{-1}\Vert_{\infty} \le2+4\frac{1+\epsilon_{\tau}}{1-\epsilon_{\tau}}=c_{\boldsymbol{\mathcal{P}}},
\end{aligned}
\end{equation}
\begin{equation}\label{eq:Ti-cpsP}
\begin{aligned}
&\Vert\boldsymbol{\mathcal{S}}(\xi)\Vert_{\infty} \\ & \le \frac{1}{2\pi}\int_{\Gamma}\frac{1}{|\lambda(\xi)-z|}\Vert(\mathcal{H}(\xi)-z\mathbf{I})^{-1}\Vert_{\infty} |dz|   \\
& \le \frac{g}{2} \cdot \frac{4}{g}\cdot\sup_{z\in\Gamma}  \Vert(\boldsymbol{\mathcal{H}}(\xi)-z\mathbf{I})^{-1}\Vert_{\infty} \le \frac{4}{1-\epsilon_{\tau}}c_{\boldsymbol{\mathcal{P}}} = c_{\boldsymbol{\mathcal{S}}},
\end{aligned}
\end{equation}
where we assume $|\lambda(\xi)-1|\le g/4$ and $|\lambda(\xi)-z|\ge g/4$. Here, $c_{\boldsymbol{\mathcal{P}}}$ and $c_{\boldsymbol{\mathcal{S}}}$ serve as the upper bounds of operators echoed with \eqref{eq:constants-2}.
The assumption is justified by shrinking $\xi$ by a constant so the Neumann condition holds on $\Gamma_{1/4}:|z-1|=g/4$; the projection bound $\Vert\boldsymbol{\mathcal{P}}(\xi)\Vert_{\infty}$ on $\Gamma$ is unchanged, only the admissible step size scales. Combining with \eqref{eq:Ti-HHHH} \eqref{eq:Ti-cps} \eqref{eq:Ti-cpsP}, we can bound \eqref{eq:Timu} by
\begin{equation}\label{eq:Ti-cpsPok}
\begin{aligned}
|\lambda'''(\xi)| \le K_3.
\end{aligned}
\end{equation}
Let $\xi \le \omega\le \sqrt\frac{3\tau}{2QK_3}$ (\textit{the last term in \eqref{eq:mu-sg-explicit} appears}), then
\begin{equation}\label{eq:Ti-cpsPokss}
\begin{aligned}
 \frac{|\lambda'''(\xi)|}{6}\omega^{2}\le \frac{\tau}{4Q}.
\end{aligned}
\end{equation}
This leads the remainder to be absorbed, and the final rate becomes linear by $1-\frac{\tau}{4Q}\omega$. The proof is complete. 
\end{proof}

\begin{remark}[Trade-offs and choices]\label{rm:4}
Because $\lambda''(0)=0$, the first nontrivial stepsize restriction comes from the third-order Taylor term, controlled via $|\lambda'''(\xi)|$, which typically yields a sharper bound. For a more conservative guarantee, one may instead use an expansion with a second-order remainder. The choice of contour radius around $\Gamma$ trades off the constants: $\tfrac12$ is a simple, roughly balanced option, but tuning this ratio can reduce conservatism. Using other induced norms (e.g., spectral or weighted $\ell_\infty$) can tighten the constant sets in \eqref{eq:constants} and \eqref{eq:constants-2} and thereby enlarge the admissible range of $\mu$. The bound can also be simplified to an $\mathcal{O}(\cdot)$ form by further absorbing terms in these constant sets, which we leave for future refinement. In practice, gaps are often larger and couplings smaller than in the worst case, so stable simulations typically admit stepsizes exceeding the conservative threshold in \eqref{eq:mu-sg-explicit}. Moreover, for a fixed stepsize, decreasing $\epsilon_\tau$ can further accelerate convergence.
\end{remark}

\appendices
\renewcommand{\thesection}{\arabic{section}}
\renewcommand{\thesectiondis}{S\arabic{section}}
\section{Proof of Lemma 1}
\label{Note:A}
Multiplying both sides of (10) by $\tfrac{1}{R}\big(\mathbf{1}^{\top}\!\otimes\!\mathbf{I}_d\big)$ and using (12), it shows
\begin{equation}
\begin{aligned}
\bar{\mathbf{z}}^{k+1} &= \tfrac{1}{R}\big(\mathbf{1}^{\top}\!\otimes\!\mathbf{I}_d\big)\mathcal{W}^k\big(\mathcal{Z}^{k}-\mu\,\mathit{G}^{k}\big),\\
\bar{\mathbf{g}}^{k+1} &= \tfrac{1}{R}\big(\mathbf{1}^{\top}\!\otimes\!\mathbf{I}_d\big)\mathcal{W}^k\,\mathit{G}^{k} + \overline{\nabla f}^{k+1} -\overline{\nabla f}^k.
\end{aligned}
\label{eq:l11}
\end{equation} 
Using the mixed product property on $\mathcal{W}^k=\mathbf{W}^{(k\%\tau)}\otimes \mathbf{I}_d$ and $\mathbf{1}^{\mathsf T}\mathbf{W}^{(k\%\tau)}=\mathbf{1}^{\mathsf T}$ in Assumption~\ref{as:stochastic} , it holds
\begin{equation}
    \begin{aligned}
        \tfrac{1}{R}\big(\mathbf{1}^{\top}\!\otimes\!\mathbf{I}_d\big)\mathcal{W}^k= \tfrac{1}{R}\big(\mathbf{1}^{\top}\!\otimes\!\mathbf{I}_d\big),
    \end{aligned}
\end{equation}
Then ~\eqref{eq:avg-z} holds.
Similarly, given $\mathbf{g}_r^0=\nabla f_r(\mathbf{z}_r^0)$, telescoping \eqref{eq:l11} gives $ \bar{\mathbf{g}}^{k}=\overline{\nabla f}^k$ and hence~\eqref{eq:avg-g}. According to the definition of $\Delta^k$ in \eqref{notations}, the equality in \eqref{eq:delta-bound} holds. Since each $f_r(\mathbf{z})$ is 
$L$-smooth by Assumption~\ref{as:smooth-strong}, we can write 
\begin{equation}
    \begin{aligned}
        \left\|\frac1R\sum_{r=1}^R \big(\nabla f_r(\mathbf{z}_r^{k}) - \nabla f_r(\bar{\mathbf{z}}^{k})\big)\right\|\le \frac{L}{R} \sum_{r=1}^R\|\mathbf{z}_r^k-\bar{\mathbf{z}}^k\|.
    \end{aligned}
    \label{eq:l12}
\end{equation}
By Cauchy–Schwarz inequality, we further bounded
\begin{equation}
    \begin{aligned}
\frac{L}{R} \sum_{r=1}^R\|\mathbf{z}_r^k-\bar{\mathbf{z}}^k\| \le \frac{L(\sum_{r=1}^R 1)^{1/2}}{R} \Big(\sum_{r=1}^R \|\mathbf{z}_r^k-\bar{\mathbf{z}}^k\|^{2}\Big)^{1/2}
    \end{aligned}
    \label{eq:l13}
\end{equation}
of which the r.h.s is $\frac{L}{\sqrt{R}}\,\|\mathcal{C}^k\|$. The proof is completed.\hfill$\square$
\section{Proof of Lemma 2}
\label{Note:B}
Merging ~\eqref{eq:avg-z}-\eqref{eq:avg-g}, we can write
\begin{equation}
    \begin{aligned}
\bar{\mathbf{z}}^{k+1}=\bar{\mathbf{z}}^{k}-\mu\nabla F(\bar{\mathbf{z}}^{k})-\mu \Delta^k.
    \end{aligned}
    \label{eq:l21}
\end{equation}
Subtracting $\mathbf{z}^\star$ from \eqref{eq:l21} and inserting $\nabla F(\mathbf{z}^\star)=0$ into the r.h.s., it holds
\begin{equation}
    \begin{aligned}
\mathbf{s}^{k+1} 
=(\mathbf{I}_d-\mu  \mathbf{H})\mathbf{s}^{k}-\mu\Delta^k,
    \end{aligned}
    \label{eq:l22}
\end{equation}
Under Assumption~\ref{as:smooth-strong}, $F(\mathbf{z})$ is $m$-strongly convex and $L$-smooth ($m\mathbf{I}_d\preceq \nabla^2 F(\mathbf{z})\preceq L\mathbf{I}_d$), and hence for $\mu\le 1/L$,
\begin{equation}
(1-\mu L)\mathbf{I}_d\preceq \mathbf{I}_d-\mu  \mathbf{H}\preceq (1-\mu m)\mathbf{I}_d.
\label{eq:l2ine}
\end{equation}
Using \eqref{eq:l22} and the triangle inequality,
\begin{equation}\label{eq:exact-final}
\begin{aligned}
   & \|\mathbf{s}^{k+1}\| 
=\big\|(\mathbf{I}_d-\mu \mathbf{H})\mathbf{s}^{k}-\mu\Delta^k\|\\&
\le \big\|(\mathbf{I}_d-\mu \mathbf{H})\mathbf{s}^{k}\big\| + \mu\|\Delta^k\|
\overset{\eqref{eq:l2ine}}{\le} (1-\mu m)\,\|\mathbf{s}^{k}\|+\mu\|\Delta^k\|.
\end{aligned}
\end{equation}
Finally, by \eqref{eq:delta-bound}, $\|\Delta^k\|\le \tfrac{L}{\sqrt{R}}\|\mathcal{C}^k\|$, which yields the stated second inequality.
\hfill$\square$

\section{Proof of Lemma 3}
\label{Note:C}
Multiplying $\mathcal{J}_\perp$ on both sides of the first line in (10), using $\mathcal{J}_\perp \mathcal{W}^k=\!(\mathcal{W}^k-\mathcal{J})$ and $\mathcal{J}_\perp \mathcal{J}=\mathbf{0}$, we obtain 
\begin{equation}\label{eq:L31}
\begin{aligned}
\mathcal{C}^{k+1}&=\mathcal{J}_\perp \mathcal{Z}^{k+1}=(\mathcal{W}^k-\mathcal{J})(\mathcal{Z}^{k}-\mu\mathit{G}^{k})\\
&=(\mathcal{W}^k-\mathcal{J})(\mathcal{C}^{k}-\mu\mathcal{T}^{k}).
\end{aligned}
\end{equation}
Iterating over $\{(i-1)\tau,\cdots,k-1,k\}$, it follows that
\begin{equation}\label{eq:L32}
\begin{aligned}
\mathcal{C}^{k}&= \Big(\prod_{j=i\tau}^{k-1} (\mathcal{W}^j-\mathcal{J})\Big)\Big(\prod_{j=(i-1)\tau}^{i\tau-1} (\mathcal{W}^j-\mathcal{J})\Big)\mathcal{C}^{(i-1)\tau}\\
&-\mu\sum_{j=(i-1)\tau}^{k-1}\Big(\prod_{t=j}^{k-1} (\mathcal{W}^t-\mathcal{J})\Big)\mathcal{T}^{j}.
\end{aligned}
\end{equation}
where $\prod$ is the left-wise product. Taking the spectral norm on \eqref{eq:L32}, using norm submultiplicativity together with 
\begin{equation}\label{eq:L33}
\begin{aligned}
&\big\Vert\mathcal{W}^j-\mathcal{J}\big\Vert=\big\Vert (\mathbf{W}^{(j\%\tau)}-\tfrac{1}{R}\,\mathbf{1}\mathbf{1}^{\!\top})\otimes \mathbf{I}_d\Vert\leq 1\\
&\Big\Vert\prod_{j=(i-1)\tau}^{i\tau-1} (\mathcal{W}^j-\mathcal{J})\Big\Vert=\Big\Vert\big(\prod_{j=0}^{\tau-1} \mathbf{W}^j-\tfrac{1}{R}\,\mathbf{1}\mathbf{1}^{\!\top}\big)\otimes \mathbf{I}_d\Big\Vert=\epsilon_\tau
\end{aligned}
\end{equation}
from Assumption~\ref{as:stochastic} yields~\eqref{eq:block-e}.
\hfill$\square$

\section{Proof of Lemma 4}
\label{Note:D}
Similar to \eqref{eq:L31}, we can write 
\begin{equation}\label{eq:L41}
\begin{aligned}
\mathcal{T}^{k+1}
= (\mathcal{W}^k-\mathcal{J})\mathcal{T}^{k}
+ \mathcal{J}_\perp\!\big(\nabla \mathcal{F}^{\,k+1}-\nabla \mathcal{F}^{\,k}\big).
\end{aligned}
\end{equation}
Iterating over $\{(i-1)\tau,\cdots,k-1,k\}$, and using $\mathcal{J}_\perp \mathcal{W}^k=\mathcal{W}^k\mathcal{J}_\perp $ and $\mathcal{J}_\perp \mathcal{J}=\mathbf{0}$, it follows that
\begin{equation}\label{eq:L42}
\begin{aligned}
\mathcal{T}^{k}&= \Big(\prod_{j=i\tau}^{k-1} (\mathcal{W}^j-\mathcal{J})\Big)\Big(\prod_{j=(i-1)\tau}^{i\tau-1} (\mathcal{W}^j-\mathcal{J})\Big)\mathcal{T}^{(i-1)\tau}\\
&+\sum_{j=(i-1)\tau}^{k-1}\Big(\prod_{t=j+1}^{k-1} (\mathcal{W}^t-\mathcal{J})\Big)\big(\nabla \mathcal{F}^{\,j+1}-\nabla \mathcal{F}^{\,j}\big).
\end{aligned}
\end{equation}
Taking the spectral norm and using \eqref{eq:L33}, we write
\begin{equation}\label{eq:L43}
\begin{aligned}
\Vert\mathcal{T}^{k}\Vert\leq \epsilon_\tau \Vert\mathcal{T}^{(i-1)\tau}\Vert
+\sum_{j=(i-1)\tau}^{k-1}\Vert\nabla \mathcal{F}^{\,j+1}-\nabla \mathcal{F}^{\,j}\Vert.
\end{aligned}
\end{equation}
By $L$-smoothness,
\begin{equation}\label{eq:L44}
\big\Vert\nabla \mathcal{F}^{\,j+1}-\nabla \mathcal{F}^{\,j}\big\Vert
\le L\,\|\mathcal{Z}^{j+1}-\mathcal{Z}^{j}\|.
\end{equation}
By (10),
$\mathcal{Z}^{j+1}-\mathcal{Z}^{j}=(\mathcal{W}^j-\mathbf{I}_{Rd})\,\mathcal{Z}^{j}-\mu\,\mathcal{W}^j\,\mathit{G}^j$, and we can decompose $\mathcal{Z}^{j}=\mathcal{C}^{j}+\mathbf{1}\!\otimes\!\bar{\mathbf{z}}^{j}$ with $(\mathcal{W}^j-\mathbf{I}_{Rd})(\mathbf{1}\!\otimes\!\bar{\mathbf{z}}^{j})=\mathbf{0}$, it therefore holds
\begin{equation}\label{eq:L45}
\begin{aligned}
\|\mathcal{Z}^{j+1}-\mathcal{Z}^{j}\|
\le \underbrace{\|\mathcal{W}^j-\mathbf{I}_{Rd}\|_{\mathcal{J}_\perp}}_{\le\,c_W\,\le 2}\,\|\mathcal{C}^j\|
+ \mu\,\underbrace{\|\mathcal{W}^j\|}_{\le 1}\,\|\mathit{G}^j\|.
\end{aligned}
\end{equation}
Decomposing $\mathit{G}^j=\mathcal{T}^j+\mathcal{J}\mathit{G}^j$ and using $\|\mathcal{J}\mathit{G}^j\|=\sqrt{R}\,\|\bar{\mathbf{g}}^j\|$ with $\bar{\mathbf{g}}^j=\overline{\nabla f}^{\,j}=\nabla F(\bar{\mathbf{z}}^j)+\Delta^j$ from Lemma~\ref{lem:avg-invariants}, we bound 
\begin{equation}\label{eq:L46}
\begin{aligned}
\|\mathit{G}^j\| & \le \|\mathcal{T}^{j}\| + \sqrt{R}\,\|\bar{\mathbf{g}}^j\|\\ 
& \le\|\mathcal{T}^{j}\| + \sqrt{R}\big(L\|\mathbf{s}^j\| + \tfrac{L}{\sqrt{R}}\|\mathcal{C}^j\|\big),
\end{aligned}
\end{equation}
where we used $L$-smoothness of $F(\mathbf{z})$ and \eqref{eq:delta-bound}. Combining \eqref{eq:L44}--\eqref{eq:L46},
\begin{equation}\label{eq:L47}
\begin{aligned}
\big\Vert\nabla \mathcal{F}^{\,j+1}-\nabla \mathcal{F}^{\,j}\big\Vert
&\le \,L\,c_W\,\|\mathcal{C}^{j}\|
+ \mu\,L^{2}\,\|\mathcal{C}^{j}\|
\\& + \mu\,L\,\|\mathcal{T}^{j}\|
+ \mu\,\sqrt{R}\,L^{2}\,\|\mathbf{s}^{j}\|.
\end{aligned}
\end{equation}
Plugging \eqref{eq:L47} into \eqref{eq:L43} yields, for any $k\in\{i\tau,\ldots,(i+1)\tau\}$,
\begin{equation}\label{eq:L48}
\begin{aligned}
& \Vert\mathcal{T}^{k}\Vert \\
&\le
\epsilon_\tau \Vert\mathcal{T}^{(i-1)\tau}\Vert
+ \sum_{j=(i-1)\tau}^{k-1}\Big(
Lc_W\Vert\mathcal{C}^{\,j}\Vert
+ \mu L^{2}\Vert\mathcal{C}^{\,j}\Vert 
\\& + \mu\,L\,\Vert\mathcal{T}^{\,j}\Vert
+ \mu\,\sqrt{R}\,L^{2}\,\Vert\mathbf{s}^{\,j}\Vert
\Big).
\end{aligned}
\end{equation}
Switching terms, ~\eqref{eq:track-one} then follows.

\end{document}